\numberwithin{equation}{section}
\theoremstyle{plain}
\newtheorem{thm}{Theorem}[section]
\newtheorem{condition}{Condition}
\newcommand{\bG}{\boldsymbol\Gamma}
\newcommand{\bg}{\boldsymbol\gamma}
\newcommand{\bZ}{\mathbf Z}
\newcommand{\bH}{\mathbf H}
\newcommand{\bh}{\boldsymbol h}
\newcommand{\bth}{\boldsymbol\theta}
\newcommand{\balpha}{\boldsymbol\alpha}
\newcommand{\bX}{\mathbf X}
\newcommand{\bm}{\boldsymbol m}
\newcommand{\bu}{\boldsymbol u}
\newcommand{\bn}{\boldsymbol n}
\renewenvironment{proof}{{\bf Proof.}}{$\Box$}
\newcommand\footnoteref[1]{\protected@xdef\@thefnmark{\ref{#1}}\@footnotemark}
\begin{document}

\begin{frontmatter}
\title{Bayesian Propagation of Record Linkage Uncertainty into Population Size Estimation of Human Rights Violations\thanksref{T1}}
\runtitle{Linkage-Averaging for Population Size Estimation}
\thankstext{T1}{Partially supported by NSF grants SES-11-30706 and SES-11-31897.}

\begin{aug}
\author{\fnms{Mauricio} \snm{Sadinle}\ead[label=e1]{msadinle@uw.edu}}
\runauthor{M. Sadinle}

\affiliation{University of Washington}

\address{Department of Biostatistics\\
Department of Statistics\\
Center for Statistics and the Social Sciences\\
University of Washington\\
Box 357232 \\
Seattle, WA 98195\\
\printead{e1}}
\end{aug}

\begin{abstract}
Multiple-systems or capture-recapture estimation are common techniques for population size estimation, particularly in the quantitative study of human rights violations.  These methods rely on multiple samples from the population, along with the information of which individuals appear in which samples.  The goal of record linkage techniques is to identify unique individuals across samples based on the information collected on them.  Linkage decisions are subject to uncertainty when such information contains errors and missingness, and when different individuals have very similar characteristics.  Uncertainty in the linkage should be propagated into the stage of population size estimation.  We propose an approach called \emph{linkage-averaging} to propagate linkage uncertainty, as quantified by some Bayesian record linkage methodologies, into a subsequent stage of population size estimation.  Linkage-averaging is a two-stage approach in which the results from the record linkage stage are fed into the population size estimation stage.  We show that under some conditions the results of this approach correspond to those of a proper Bayesian joint model for both record linkage and population size estimation.  The two-stage nature of linkage-averaging allows us to combine different record linkage models with different capture-recapture models, which facilitates model exploration.  We present a case study from the Salvadoran civil war, where we are interested in estimating the total number of civilian killings using lists of witnesses' reports collected by different organizations.  These lists contain duplicates, typographical and spelling errors, missingness, and other inaccuracies that lead to uncertainty in the linkage.  We show how linkage-averaging can be used for transferring the uncertainty in the linkage of these lists into different models for population size estimation.
\end{abstract}

\begin{keyword}
\kwd{Capture-recapture}
\kwd{Counting casualties}
\kwd{Data linkage}
\kwd{Decomposable graphical model}
\kwd{Duplicate detection}
\kwd{Entity resolution}
\kwd{Multiple-systems estimation}
\kwd{Multiple record linkage}
\end{keyword}

\end{frontmatter}

\section{Introduction}\label{ch:Intro} 

In the context of armed conflicts, a basic question is \emph{how many human rights violations occurred} in a given time and space.  While a complete enumeration is not typically feasible, it is common to find multiple organizations monitoring and collecting reports on those violations.  Given that witnesses or victims may report an event to different organizations, and different witnesses may report an event to the same organization, the associated record systems often end up containing multiple entries referring to the same violations, even within the same data source.  Those reports may contain different degrees of detail and accuracy, and typically do not contain unique identifiers of the victims, such as national identification numbers.  Therefore, even the more basic question of how many unique human rights violations have been reported cannot be easily answered.  Record linkage techniques are required to detect duplicated reports within each source and to link coreferent reports across data sources.  The result of this linkage stage is often used to derive estimates of the total number of unreported violations using capture-recapture or multiple-systems estimation.  The Human Rights Data Analysis Group --- HRDAG\footnote{Website: https://hrdag.org/} has been a leader and a pioneer in using these methodologies to study human rights violations in several countries \citep[see, e.g.][]{LumPriceBanks13,Priceetal15,PriceBall15}.  Here we revisit a case from El Salvador, where we combine three data sources to explore the question of how many civilians were killed during the Salvadoran civil war (1980--1991) in San Salvador.  

A limitation in this area of application is that population sizes are estimated taking a given linkage of the lists as being the correct one.  Current practice therefore understates the overall uncertainty around the population size as it ignores the uncertainty from the linkage.  We propose a simple procedure called \emph{linkage-averaging} for incorporating the uncertainty from record linkage into subsequent population size estimation using multiple-systems or capture-recapture models.  Linkage-averaging is possible thanks to the advent of Bayesian partitioning approaches that provide proper accounts of the uncertainty in the linkage process \citep[e.g.][]{Matsakis10, Sadinle14, Steortsetal13}.  Linkage-averaging requires two stages.  First, we use a Bayesian partitioning approach to obtain a posterior sample of possible linkages between the lists.  Then, for each of those linkages we obtain a posterior distribution on the population size using a capture-recapture model.  The individual population size posteriors are combined by taking a simple average.  This approach is appealing for being simple and intuitive, and we show that if the capture-recapture model uses only functions of the linkage then linkage-averaging is equivalent to a proper Bayesian approach to joint record linkage and population size estimation.  The two-stage nature of linkage-averaging facilitates model exploration as linkage results can be reused with different capture-recapture models, and it is also well suited for lists with restricted access due to confidentiality constraints given that the information used for the linkage does not have to be transferred to the analyst doing population size estimation. Linkage-averaging has broader applicability since, for example, census coverage evaluation \citep[e.g.][]{EricksenKadaneTukey89,Hogan92,Hogan93,AndersonFienbergbook} and disease prevalence estimation \citep[e.g.][]{LaPorteetal93,MadiganYork97} are also carried out by linking multiple data sources followed by population size estimation.  

We review Bayesian partitioning approaches to record linkage in Section \ref{s:record_linkage}, Bayesian approaches for population size estimation in Section \ref{s:pop_size}, and in Section \ref{s:lapse} we show how to combine them using linkage-averaging.  Finally, in Section \ref{s:ElSalvadorGeneralApplication} we apply this approach to the case study from El Salvador mentioned above.  

\section{Bayesian Partitioning Record Linkage Approaches} \label{s:record_linkage} 

Let $\bX_k$ be the $k$th data source or list, which contains $r_k$ records as its rows, $k=1,\dots,K$.  We define $\bX = (\bX_1,\dots,\bX_K)^T$ as the concatenated list containing all the $r=\sum_k r_k$ records coming from the $K$ different sources.  
The total number of different fields available from the lists is $F$, and if one of these fields is not recorded in a list then it will be missing in $\bX$ for all records coming from that list.  With $n\leq r$ different individuals represented in $\bX$, jointly detecting duplicates within lists and linking records across lists is equivalent to partitioning the rows of $\bX$ into the $n$ groups of coreferent records.  This {\it coreference partition} \citep{Matsakis10,Sadinle14,Steortsetal13} is the parameter of interest in joint duplicate detection and record linkage.

A computationally simple representation of partitions uses arbitrary labelings of the partition's groups.  Let $\bZ=(Z_1,\dots,Z_r)$ be a vector of length $r$ representing a labeling of the records in $\bX$, such that two records receive the same label if and only if they are a match/coreferent. An intuitive way of thinking of $\bZ$ is as an underlying unique identifier that we want to recover. Although the labeling given by $\bZ$ is arbitrary, any equivalent relabeling leads to the same partition of the records, which is what we care about.  Indeed, two records are a match or coreferent if and only if $Z_i=Z_j$.  To fix ideas, the vectors $\bZ=(1, 2, 1, 3, 3)$ and $\bZ=(4, 5, 4, 2, 2)$ are two labelings of the same partition of five elements, since in both $Z_1=Z_3\neq Z_4=Z_5$, and $Z_2$ gets its own value.  

From a Bayesian point of view, one obtains a posterior distribution on $\bZ$ given $\bX$, and the variability captured by this posterior should ideally reflect the uncertainty in the record linkage and duplicate detection procedure.  There exist two types of approaches to obtain such a posterior on $\bZ$: direct modeling approaches and comparison-based approaches.

\subsection{Direct-Modeling Approaches}

A number of Bayesian approaches to both duplicate detection and record linkage have been proposed where one directly models the information contained in the lists/datafiles \citep{Matsakis10, TancrediLiseo11, Fortinietal02, Gutmanetal13, Steortsetal13},  that is, one proposes a model $P(\bX\mid \bZ)$ for the information observed in the lists, and a posterior on $\bZ$ is derived as $p(\bZ\mid \bX)\propto p(\bZ)P(\bX\mid \bZ)$, with the help of a prior on partitions $p(\bZ)$.  To write down $P(\bX\mid \bZ)$ one needs crafting specific models for each type of field in the lists.  The models of \cite{Matsakis10, TancrediLiseo11, Steortsetal13, Steorts15} share the characteristic that given a value of $\bZ$, the clusters of coreferent records are modeled as distortions of some latent record containing the true information of a latent individual.  These approaches currently mostly handle categorical fields, with the exceptions of \cite{Steorts15} who proposed an empirical Bayes approach to model names, and \cite{LiseoTancredi11} who handle continuous fields under normality.  In practice, however, fields that are complicated to model, such as strings, addresses, phone numbers, or dates, are also important to detect coreferent records.  These type of fields are often subject to typographical and other types of errors, which makes it important to take into account partial agreements between their values.  Existing direct modeling approaches also currently do not handle missing data, although this extension should be easy to implement.

\subsection{Comparison-Based Approaches}\label{ss:comparison-based}

A number of approaches to record linkage and duplicate detection rely on the often reasonable assumption that two records referring to the same entity should be very similar.  If this is not the case in a given application scenario then the task of finding coreferent records might be hopeless.  Comparison vectors are computed for pairs of records from $\bX$ to summarize evidence of whether they refer to the same entity.  For record pair $(i,j)$ we compare each field $f=1,\dots,F$ by computing some similarity measure $\mathcal{S}_f(i,j)$, which depends on the type of information contained by each field.  For unordered categorical fields like sex or race, $\mathcal{S}_f$ can simply be a binary comparison checking whether the records agree in that field.  For more structured fields, $\mathcal{S}_f$ should be able to capture partial agreements.  For example, in the case of strings such as names or addresses, $\mathcal{S}_f$ should correspond to a string metric, such as the Levenshtein edit distance, the Jaro--Winkler score, or any other \citep[see][]{Bilenkoetal03,Elmagarmidetal07}.  Some of these comparisons will be missing, since if  field $f$ is missing for a record $i$, then $\mathcal{S}_f(i,j)$ will be missing regardless of whether field $f$ is observed for record $j$.

In principle, we could define the comparison vectors using the original similarity values $\mathcal{S}_f(i,j)$, $f=1,\dots,F$, but 
the direct modeling of the $\mathcal{S}_f(i,j)$'s requires customized models per type of comparison, because the outputs of these similarity measures lie in different spaces, depending on the type of field being compared.  Instead, \cite{Sadinle14} followed \cite{Winkler90Strings} in dividing the range of each similarity measure $\mathcal{S}_f$ into $L_f+1$ intervals $I_{f0}, I_{f1},\dots, I_{fL_f}$, that represent different levels of disagreement.  In this construction we associate the interval $I_{f0}$ with the highest level of agreement, including no disagreement, and the last interval, $I_{fL_f}$, with the highest level of disagreement, which depending on the field may represent complete or strong disagreement. For records $i$ and $j$, and field $f$, we define
\begin{equation*}
\gamma^{f}_{ij} = l, \hbox{ if \  } \mathcal{S}_f (i,j) \in I_{fl}.
\end{equation*}
As the value of $\gamma^{f}_{ij}$ increases, so does the disagreement between records $i$ and $j$ with respect to field $f$.  The possible values of $\gamma^{f}_{ij}$ simply represent the categories of an ordinal variable.  We then define the comparison vector $\bg_{ij}=(\gamma_{ij}^1,\dots,\gamma_{ij}^f,\dots,\gamma_{ij}^F)$ for records $i$ and $j$.  Building comparison data as ordinal categorical variables facilitates modeling since we can use a generic model for any type of comparison, as long as its values are categorized in a meaningful way.

A number of traditional record linkage and duplicate detection approaches use pairwise comparisons $\bg_{ij}$, but they output independent pairwise decisions on the matching/coreference status of pairs of records \citep{FellegiSunter69,Winkler88, Jaro89,LarsenRubin01}, which then need to be reconciled in some ad-hoc manner as they may not be compatible with one another.  \cite{Sadinle14} modified comparison-based approaches to  directly target $\bZ$ rather than pairwise matching decisions.  Letting $\bG(\bX)$ denote the comparison data for all record pairs, the approach of \cite{Sadinle14} corresponds to a model  $P(\bG(\bX)\mid \bZ)$ which along with a prior $p(\bZ)$ allows us to obtain a posterior $p(\bZ\mid \bG(\bX))$.

The model for the comparison data  $\bG(\bX)$ presented by \cite{Sadinle14} assumes that $\bg_{ij}$ is a realization of a random vector $\bG_{ij}$ such that:
\begin{eqnarray*}
\bG_{ij}\mid Z_i=Z_j ~ \overset{iid}{\sim} ~ G_1; ~~
\bG_{ij}\mid Z_i\neq Z_j ~ \overset{iid}{\sim} ~ G_0.
\end{eqnarray*}
In this model, $G_1$ and $G_0$ represent the distributions of the comparison vectors among coreferent and non-coreferent pairs, respectively. 

\cite{Sadinle14} parameterized $G_1$ as
\begin{equation}\label{eq:DDRL_P1missing}
P_1(\bg^{obs}_{ij}\mid \Phi_1) = \prod_{f=1}^{F}
\Bigg[\prod_{l=0}^{L_f-1} (m_{fl})^{I(\gamma^{f}_{ij}=l)}(1-m_{fl})^{I(\gamma^{f}_{ij}>l)}\Bigg]^{I_{obs}(\gamma_{ij}^f)},
\end{equation}
which is obtained under conditional independence of the comparison fields, and ignorability of the missingness in the comparison vectors.  
 $I_{obs}(\gamma_{ij}^f)$ indicates whether $\gamma_{ij}^f$ is observed, $\Phi_1=(\bm_1,\dots,\bm_F)$, with $\bm_f=(m_{f0},\dots,m_{f,L_f-1})$, where $m_{f0}=P_1(\Gamma^f_{ij}=0)$, and $m_{fl}=P_1(\Gamma^f_{ij}=l\mid \Gamma^f_{ij}>l-1)$ for $0<l<L_f$.  A similar expression can be obtained for $P_0(\bg^{obs}_{ij}\mid \Phi_0)$ in terms of parameters $\Phi_0=(\bu_1,\dots,\bu_F)$.  

The parameterization in terms of the sequential conditional probabilities $m_{fl}$ facilitates prior specification.  The parameter $m_{fl}=P_1(\Gamma^f_{ij}=l\mid \Gamma^f_{ij}>l-1)$ represents the probability of observing disagreement level $l$ in the comparison $f$, among two coreferent records with disagreement larger than level $l-1$.  Unless we expect field $f$ in one of these two datafiles to be highly unreliable, we would a priori expect each $m_{fl}$ to be fairly close to 1.  For example, for $l=0$ this is simply $m_{f0}=P_1(\Gamma^f_{ij}=0)$, which represents the marginal probability of disagreement level zero, which encodes full or a high degree of agreement, and so $m_{f0}$ should be high if the field $f$ in these two datafiles does not contain too many errors.  For $l=1$, we have $m_{f1}=P_1(\Gamma^f_{ij}=1\mid \Gamma^f_{ij}>0)$, which represents the probability of observing disagreement level one in the comparison $f$, among coreferent record pairs with disagreement larger than what is captured by the level zero.  If the number of disagreement levels is greater than two, we can think of level one of disagreement as a type of mild disagreement, meaning that if we expect the amount of error to be relatively small, then $m_{f1}$ should be concentrated around values close to one.  As we consider other parameters $m_{fl}$ for levels $l>2$, it is easy to see that they should also be close to one, if field $f$ does not contain too many errors.  In general, we can therefore think of using the priors $m_{fl}\sim\text{Uniform}[\lambda_{fl},1]$, for some prior truncation points $0<\lambda_{fl}<1$, such that the less accurate we believe field $f$ is, the lower the value for $\lambda_{fl}$.  More generally, we could take truncated beta priors, but here we focus on specifying our prior beliefs through these truncation points $\lambda_{fl}$.  

It is more difficult to incorporate prior information on the probabilities $u_{fl}=P_0(\Gamma^f_{ij}=l\mid \Gamma^f_{ij}>l-1)$, since the distribution of the disagreement levels among non-coreferent record pairs may be quite different depending on the characteristics of the fields.  For example, a categorical field with a highly frequent category will lead to a high probability of $\Gamma^f_{ij}=0$ even for non-coreferent record pairs, but a field like phone number or address will lead to small probabilities of agreement among non-coreferent record pairs.  For simplicity we therefore take each $u_{fl}\sim\text{Uniform}(0,1)$.

The approach of \cite{Sadinle14} heavily relies on being able to reduce the set of candidate coreferent record pairs on which vectors of comparisons are computed.  By using simple rules that can efficiently identify non-coreferent pairs we seek to avoid comparing all the $r(r-1)/2$ record pairs when $r$ is large. 
For example, if the datafiles contain a categorical field deemed to be error-free, one can simply take records disagreeing on that field as being non-coreferent.  This simple approach is known as {\it blocking}.  Unfortunately, in many applications all fields may be subject to error, and therefore we need to devise other ways of filtering non-coreferent records.  An alternative is to exploit prior knowledge on the kinds of errors that would be unlikely for a certain field, thereby declaring as non-coreferent any record pair that disagrees more than a predefined amount in that field.  
There also exist other more sophisticated techniques to detect sets of non-coreferent pairs, which are extensively surveyed by \cite{Christen12}.  

After this initial filtering step, the set $\mathcal{P}$ comprises the remaining candidate coreferent record pairs, on which we compute comparison vectors.  Using these comparison vectors we define additional rules to fix record pairs as non-coreferent.
For instance, strong disagreements in both given and family names, or in other combination of fields may be a robust indication of the pair being non-coreferent. The final set of candidate coreferent pairs is $\mathcal{C}\subseteq\mathcal{P}$.  

The possible coreference partitions are finally constrained to the set $\mathcal{Z}=\{\bZ: Z_{i}\neq Z_j,  \ \forall \ (i,j)\notin \mathcal{C}\}$, that is, any partition that puts together record pairs already declared as non-coreferent is unfeasible. The approach of \cite{Sadinle14} relies on $\mathcal{Z}$ being much smaller than the set of all possible partitions, which is why we heavily rely on being able to obtain a small set of candidate pairs $\mathcal{C}$.  The comparison vectors of the pairs in $\mathcal{P}\setminus\mathcal{C}$ are used in the model but fixed as non-coreferent pairs, that is, they never get assigned the same label in $\bZ$.  The prior distribution on $\bZ$ used by  \cite{Sadinle14} was derived from a uniform distribution on partitions constrained to partitions consistent with $\mathcal{Z}$. A simple way to obtain the flat prior on partitions from a prior for $\bZ$ is by assigning equal probability to each of the $r!/(r-n)!$ labelings of a partition with $n$ groups, which leads to the prior on partition labelings $p(\bZ)\propto [(r-n(\bZ))!/r!]I(\bZ\in \mathcal{Z})$, where $n(\bZ)$ measures the number of different labels in labeling $\bZ$.

Finally, \cite{Sadinle14} developed a Gibbs sampler to obtain draws from the posterior distribution of $\bZ$.

\subsection{A Practical Comparison of Bayesian Partitioning Record Linkage Approaches}

Both direct-modeling and comparison-based approaches have advantages and disadvantages when compared to one another.  One can argue that direct-modeling approaches are more principled, as they directly model the records in the datafiles/lists.  Instead, comparison-based approaches merely model comparisons between pairs of records.  This advantage of direct-modeling approaches can also be seen as a disadvantage, as the lists $\bX$ may contain some combination of fields that are difficult to directly model like family and given names, dates, addresses, phone numbers, etc.  Writing $P(\bX\mid \bZ)$  requires proposing models for such fields, which requires modeling how such information gets corrupted.  Comparison-based approaches have an advantage here, because any type of field can be used to construct the comparison data, as long as the comparisons are meaningful for the fields at hand.  Therefore, models $P(\bG(\bX)\mid \bZ)$ will often be much simpler than models $P(\bX\mid \bZ)$.  

In this article we will use the comparison-based approach of \cite{Sadinle14}, which is better suited to the data from El Salvador.  Direct-modeling approaches currently do not  handle missing data and need computational speed-ups. For example, the approach of \cite{Steorts15} as implemented in the R package \texttt{blink} takes 8.4 hours to compute 30,000 MCMC iterations with a file of size 500 included in the \texttt{blink} package, and requires around 10,000 iterations to reach convergence.  By contrast, the approach of \cite{Sadinle14} can take advantage of fixing obvious non-coreferent record pairs as non-coreferent, which leads to a much faster Gibbs sampler.  With the file of size 500 included in the \texttt{blink} package, after fixing record pairs with high Levenshtein distance in first or last name as non-coreferent, we obtain 15,052 candidate coreferent pairs.  30,000 iterations of the Gibbs sampler of \cite{Sadinle14} run in one hour, but convergence is achieved in less than 10 iterations.  This comparison was done on a laptop with a processor Intel Core i7-4900MQ.

Regardless of what approach one uses, the critical requirement needed in this article is that the record linkage approach provides a set of draws $\bZ^{(1)},\bZ^{(2)},\dots,\bZ^{(d)}$ from a posterior $p(\bZ\mid \bG(\bX))$, $\bG(\bX)$ being the comparison data in the case of comparison-based approaches, or from $p(\bZ\mid \bX)$ in the case of direct-modeling approaches.

\section{Population Size Estimation}\label{s:pop_size}

To estimate the total number of units or individuals in a closed population, a number of techniques rely on the availability of $K$ incomplete lists/samples drawn from the population.  The name \emph{capture-recapture} comes from applications in population ecology where the goal is to estimate animal abundance.  In that context the technique consists in drawing $K$ samples from the population in a sequential manner while keeping track of the individuals' inclusion patterns, that is, which individuals have been included in which samples \citep[see, e.g.][]{Pollock00}.  In the context of estimating the size of human populations, the $K$ samples often come from  record systems which are not necessarily collected in a sequential manner, but are represented by datafiles or lists containing (partially) identifying information on the individuals.  In that context the term \emph{multiple-systems estimation} is often preferred \citep[see, e.g.][]{BirdKing18}.  The discussion in this article applies to capture-recapture or multiple-systems estimation models with sufficient statistics that depend only on the inclusion patterns of the different individuals \citep[e.g.,][]{Fienberg72,Bishopetal75,Castledine81,GeorgeRobert92,MadiganYork97,FienbergJohnsonJunker99,Manrique16}.

Let an inclusion pattern be represented by a vector $\bh=(h_1,\dots,h_K)$ in $\{0,1\}^K$, where $h_k=1$ indicates inclusion in the record-system $k$.  Let $n_{\bh}$ represent the number of individuals with inclusion pattern $\bh$.  The  inclusion patterns' frequencies can be organized in a contingency table $\bn^*=\{n_{\bh} \}_{\bh\in\{0,1\}^K}$.  Notice that we do not observe the number of individuals missed by all record-systems, that is, $n_{00\dots 0}$ is missing, and so we let $\bn=\{n_{\bh}\}_{\bh\in\{0,1\}^K\setminus\{0\}^K}$ represent the observed counts.   For example, with three record-systems we denote the observed frequencies of the different inclusion patterns as $\bn = \{n_{111}, n_{011}, n_{101}, n_{001}, n_{110}, n_{010}, n_{100}\}$, where, for example, $n_{101}$ represents the number of individuals included in record-systems one and three but not in record-system two.

For a given individual we can think of their inclusion pattern $\bh$ as a realization of a $K$-variate binary vector such that $P(\bh\mid \bth)=\theta_{\bh}$, with the vector $\bth=\{\theta_{\bh}\}_{\bh\in\{0,1\}^K}$ providing the probability of each inclusion pattern.  Let $\bth(m)$ denote the capture probabilities as dictated by a model $m$.  Given that there are $N=\sum_{\bh\in\{0,1\}^K}n_{\bh}$ individuals in the population, under the assumption that their inclusion patterns are independent and identically distributed, we have that the joint distribution of the contingency table $\bn^*$ is multinomial with probability mass function
\begin{equation}\label{eq:Pn_nthm}
P(\bn^*\mid N,\bth(m),m) = N!\prod_{\bh\in\{0,1\}^K}\frac{\theta_{\bh}(m)^{n_{\bh}}}{n_{\bh}!}.
\end{equation}
Notice that since for given $N$ and $\bn$ we can obtain $n_{00\dots 0}=\linebreak N-\sum_{\bh\in\{0,1\}^K\setminus\{0\}^K}\ n_{\bh}$, we can write $P(\bn\mid N,\bth(m),m)=P(\bn^*\mid N,\bth(m),m)$.

Given a model $m$ and a prior on the population size $p(N)$, we are interested in obtaining a posterior distribution 
\begin{equation}\label{eq:PN_nm}
P(N\mid \bn,m) = \frac{P(\bn\mid N,m)p(N)}{\sum_N P(\bn\mid N,m)p(N)},
\end{equation}
where
\begin{align}\label{eq:Pn_Nm_generic}
P(\bn\mid N,m) & = \int_{\bth(m)} P(\bn\mid N,\bth(m),m)p(\bth(m)\mid m)d\bth(m),
\end{align}
for a prior on the model parameters $p(\bth(m)\mid m)$, assuming that $N$ and $\bth(m)$ are independent a priori.  

As mentioned before, a number of approaches for population size estimation fit into this description, but for simplicity we only describe the approach based on decomposable graphical models of \cite{MadiganYork97} and the approach based on mixture models of \cite{Manrique16}.

\subsection{Approaches Based on Graphical Models} \label{ss:graphicalmodels}

It is especially convenient to work with models and priors that allow a closed form for $P(\bn\mid N,m)$ in \eqref{eq:Pn_Nm_generic}. 
 \cite{MadiganYork97} present one class of graphical models that have this characteristic.  Probabilistic graphical models \citep[see, e.g.,][]{Lauritzen96, Edwards00} provide a way of encoding the set of conditional independencies of a multivariate distribution into a graph.  In a graphical model, each random variable is represented by a node in a graph, and two nodes are joined by an edge if the variables are conditionally dependent given a set of other variables.  In the context of this article a graphical model captures conditional independencies between the binary variables that indicate inclusion of the individuals into the lists $\bX_1,\dots,\bX_K$.  A graphical model $m$ will depend on a set of parameters $\bth(m)$ that satisfy certain constraints dictated by the independencies in the graph.  \cite{MadiganYork97} further restrict their attention to the class of \emph{decomposable} graphical models, which are characterized by their independence graph being chordal (triangulated).  The first two columns of Table \ref{t:3var_graphs} present all non-saturated graphical models for three samples/lists, in which case all happen to be decomposable.  \cite{DawidLauritzen93} introduced the \emph{hyper-Dirichlet} distributions, which can be used as priors for the parameters $\bth(m)$ in such models, and lead to closed formulae for $P(\bn\mid N,m)$.  For the sake of this article, it is enough to say that the parameters of a hyper-Dirichlet prior can be specified from thinking on a table $\balpha=\{\alpha_{\bh}\}_{\bh\in\{0,1\}^K}$ of ``prior counts'' of the same size as $\bn^*$.  In this document we will think of all the entries of $\balpha$ being a constant $\alpha$, in particular $\alpha=1$.  Given a hyper-Dirichlet prior for the model parameters $\bth(m)$, and if $N$ and $\bth(m)$ are independent a priori, \cite{MadiganYork97} show that
\begin{align}\label{eq:Pn_Nm}
P(\bn\mid N,m) & = \int_{\bth(m)} P(\bn\mid N,\bth(m),m)p(\bth(m)\mid m)d\bth(m)\nonumber\\
& = \frac{N!}{\prod_{\bh\in\{0,1\}^K}n_{\bh}!}\frac{\Psi_m(\balpha+\bn^*)}{\Psi_m(\balpha)},
\end{align}
where 
\begin{equation}\label{eq:Psi}
\Psi_m(\balpha)=\frac{\prod_{l=1}^{L}\prod_{\bh_{C_l}}\Gamma(\alpha_{\bh_{C_l}})}{\Gamma(\sum_{\bh\in\{0,1\}^K} \alpha_{\bh})^Q\prod_{l=2}^{L}\prod_{\bh_{S_l}}\Gamma(\alpha_{\bh_{S_l}})}.
\end{equation}
In this expression $\{C_l\}_{l=1}^L$ represents the set of (maximal) cliques,  $\{S_l\}_{l=2}^L$ the set of separators (including multiplicities), and $Q$ the number of connected components of the independence graph of model $m$.  For a given subset of nodes $A$, $\bh_{A}$ represents an inclusion pattern constrained to the variables in $A$.  Finally, $\alpha_{\bh_{A}}=\sum_{\bh':\bh'_{A}=\bh_{A}}\alpha_{\bh'}$.  (Notice that Equation \eqref{eq:Psi} appears with $Q=1$ in \cite{MadiganYork97}, but if we do not take the number of connected components into account then $P(\bn\mid N,m)$ does not add up to 1).

With the methodology of \cite{MadiganYork97} we can also take into account the uncertainty on the model for population size estimation as
\begin{equation}\label{eq:PN_n}
P(N\mid \bn)= \frac{p(N)\sum_{m}P(\bn\mid N,m)p(m)}{\sum_N p(N)\sum_{m}P(\bn\mid N,m)p(m)},
\end{equation}
for a prior $p(m)$ on a finite number of models.  In this article we take $p(m)$ to be uniform over the class of models.  For three lists, there are seven non-saturated decomposable graphical models, and so $p(m)=1/7$.

\subsection{Approaches Based on Mixture Models} 

An alternative model $m$ for the probabilities of the inclusion patterns $P(\bh\mid\bth(m))=\theta_{\bh}(m)$ is obtained by assuming the existence of strata $s=1,\dots,S$, such that inside each of them the inclusion indicators are independent of each other, that is, $P(\bh\mid s, \bth_s)=\prod_{k=1}^K \theta_{sk}^{h_k}(1-\theta_{sk})^{1-h_k}$, where $P(h_k=1\mid s, \bth_s)=\theta_{sk}$ is the probability of an individual being included in list $k$ given that it belongs to stratum $s$.  Each stratum has a probability $\pi_s$, $\sum_{s=1}^S\pi_s=1$.  The probability of the inclusion patterns under this mixture model approach is therefore $\theta_{\bh}(m)=\sum_{s=1}^S\pi_s \prod_{k=1}^K \theta_{sk}^{h_k}(1-\theta_{sk})^{1-h_k}$, which can then be plugged into \eqref{eq:Pn_nthm}.

\cite{Manrique16} used the priors $\theta_{sk}\sim \text{Beta}(1,1)$, and expressed each $\pi_s=V_s\prod_{t<s}(1-V_t)$ where each $V_t\sim\text{Beta}(1,\alpha)$, $t=1,\dots,S-1$, $V_S=1$, and $\alpha\sim \text{Gamma}(.25,.25)$.  This construction is known as a finite-dimensional stick-breaking prior \citep{IshwaranJames01} and it encourages most of the mass to be concentrated in the initial $\pi_s$'s, which consequently makes the choice of $S$ irrelevant as long as it is relatively large.  These priors would in principle allows us to integrate $\bth(m)$ as in \eqref{eq:Pn_Nm_generic}, and then obtain \eqref{eq:PN_nm}, but in this case these integrals are not easily computable, which is why \cite{Manrique16} developed an MCMC algorithm to obtain posterior samples from \eqref{eq:PN_nm} under this mixture model approach.  For further details on this approach see \cite{Manrique16}.

\section{Linkage-Averaged Population Size Estimation} \label{s:lapse} 

\subsection{Derivation of Inclusion Patterns}

We start by explaining how to compute the incomplete contingency table $\bn$ from a given coreference partition labeling $\bZ$.  Let $n$ be the number of different labels in $\bZ$, that is, $n$ represents the number of different individuals that are included in the $K$ datafiles/lists according to the coreference partition represented by $\bZ$.  Without loss of generality we can think of the labels in $\bZ$ to be $1,\dots,n$.  If this is not the case we can simply obtain an equivalent labeling that uses those labels. Now, for each different label $z=1,\dots,n$, let 
\begin{equation*}
h_{zk}=
\left\{
  \begin{array}{ll}
    1, & \hbox{ if there exists a record $i\in\bX_k$ such that $Z_i=z$;} \\
    0, & \hbox{ otherwise.}
  \end{array}
\right.
\end{equation*}
The vector $\bH_k=(h_{1k},\dots,h_{nk})$ contains the indicators of whether each of the $n$ individuals is included in the $k$th datafile.  The contingency table $\bn$ is simply obtained as a cross-classification of these $K$ inclusion vectors.  We write $\bn(\bZ)$ to emphasize that the contingency table $\bn$ is a function of a coreference partition represented by $\bZ$.  

\subsection{Linkage-Averaged Population Size Estimation}\label{ss:lapse}

The output that we use from the record linkage and duplicate detection stage is a posterior sample $\bZ^{(1)},\dots,\bZ^{(d)}$ from a posterior $p(\bZ\mid \bX)$ or $p(\bZ\mid \bG(\bX))$, as exemplified in Figure \ref{f:example}.  

\begin{figure*}[h]
\centering
		\centerline{\includegraphics[width=.9\linewidth]{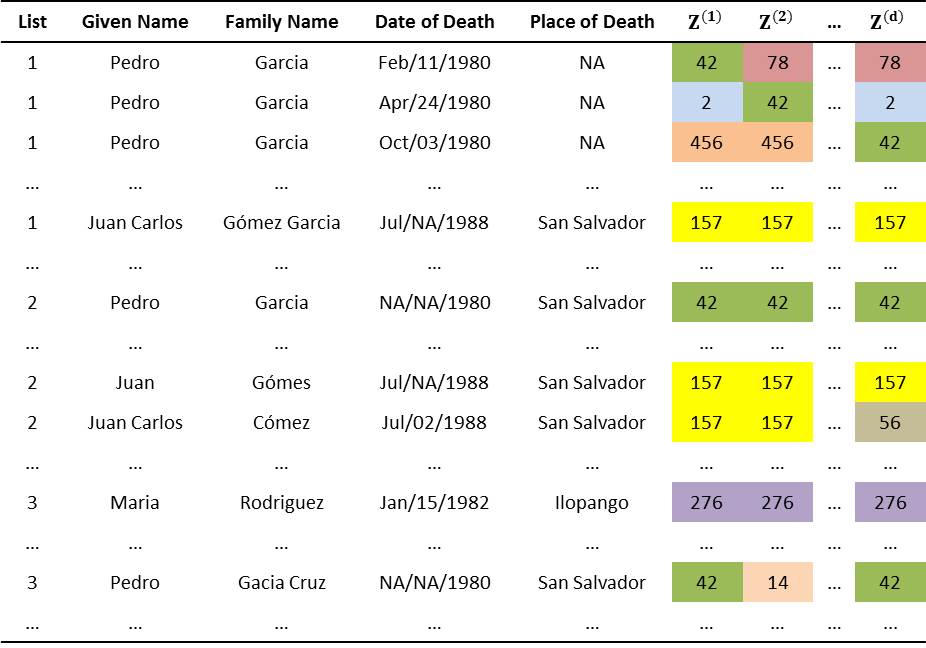}}
  \begin{minipage}[b]{1\textwidth}
  \caption{Illustration of posterior draws $\bZ^{(1)},\dots,\bZ^{(d)}$ obtained from a Bayesian partitioning methodology for record linkage and duplicate detection. The draws $\bZ^{(1)},\dots,\bZ^{(d)}$ can be informally interpreted as ``plausible unique identifiers'' for the individuals in the lists.}
\label{f:example}\end{minipage} 
\end{figure*}

For each of these draws, we can compute the implied contingency tables containing the frequencies of the inclusion patterns $\bn(\bZ^{(1)}),\dots,\bn(\bZ^{(d)})$.  For each of these contingency tables, we can obtain a posterior distribution on the population size using one of the capture-recapture models in Section \ref{s:pop_size}, that is, we can obtain $p(N\mid \bn(\bZ^{(1)})),\dots,p(N\mid \bn(\bZ^{(d)}))$, or a Monte Carlo approximation of these.  The linkage-averaged  posterior of $N$, $p_{\textsc{la}}(N)$, defined formally in the next Section, is approximated as
\begin{equation}\label{eq:LA1}
p_{\textsc{la}}(N) \approx \frac{1}{d}\sum_{t=1}^d p(N\mid \bn(\bZ^{(t)})),
\end{equation}
when each $p(N\mid \bn(\bZ^{(t)}))$ is available in closed form, as with the methodology of \cite{MadiganYork97}.  When this is not the case, as with the approach of \cite{Manrique16}, we use a random sample $N^{(1,t)}, \dots N^{(b,t)} \sim p(N\mid \bn(\bZ^{(t)}))$, for each $t=1,\dots,d$, and use the approximation
\begin{equation}\label{eq:LA2}
p_{\textsc{la}}(N) \approx \frac{1}{db}\sum_{t=1}^d\sum_{v=1}^b  I(N=N^{(v,t)}).
\end{equation}
The formal justification for this linkage-averaged posterior is given next.

\subsection{Bayesian Justification of Linkage-Averaging}

Our strategy for incorporating linkage uncertainty into population size estimation can be derived from a proper Bayesian analysis under two reasonable conditions.

\begin{condition}\label{c1} Our beliefs on $\bZ$ are represented by the posterior distribution $p_{\textsc{l}}(\bZ\mid \bX)$, coming from a model for record linkage and duplicate detection, composed by a likelihood function $\mathcal{L}_{\textsc{l}}(\bZ \mid \bX)$ and a prior $p(\bZ)$.  
\end{condition}

For our discussion, the linkage model can be one of the ones presented in Section \ref{s:record_linkage}, but we only require it to provide a proper posterior distribution on coreference partitions.  For simplicity we use the notation $p_{\textsc{l}}(\bZ\mid \bX)$ and $\mathcal{L}_{\textsc{l}}(\bZ \mid \bX)$ to represent models that either directly model the fields in the lists $\bX$ or that use comparison data, although for the latter the notation $p_{\textsc{l}}(\bZ\mid \bG(\bX))$ and $\mathcal{L}_{\textsc{l}}(\bZ\mid \bG(\bX))$ would be more appropriate, with $\bG(\bX)$ representing the comparison data built from the records in $\bX$.  

\begin{condition}\label{c2} If we knew the true value of $\bZ$, our beliefs on the population size $N$ would be represented by the posterior distribution $p_{\textsc{c}}(N\mid \bn(\bZ))$, obtained from a capture-recapture model composed by a likelihood function \linebreak $\mathcal{L}_{\textsc{c}}(N\mid \bn(\bZ))$ and a prior $p(N)$.  
\end{condition}

This condition simply indicates how we would obtain inferences on $N$ if we knew which records were coreferent.  Note that the likelihood function $\mathcal{L}_{\textsc{c}}(N\mid \bn(\bZ))$ should come from a capture-recapture model that has the frequencies of the inclusion patterns $\bn(\bZ)$, or a function of them, as sufficient statistics, such as those discussed in Section \ref{s:pop_size}.  In particular, notice that the capture-recapture model could involve only a subset of the $K$ datafiles being linked, that is, it could depend on inclusion patterns only for a subset of the $K$ datafiles.  This scenario could arise in cases where some of the datafiles being linked arise from collection mechanisms that make the assumptions of the capture-recapture model seem implausible, such as lists that target members of the population with a distinctive trait and therefore lead to zero probability of inclusion for individuals without the trait.  

Given the setup of Conditions \ref{c1} and \ref{c2}, it seems natural to compute
\begin{equation*}
p_{\textsc{la}}(N)\equiv E_{\bZ\mid \bX}[p_{\textsc{c}}(N\mid \bn(\bZ))]=\sum_{\bZ} p_{\textsc{c}}(N\mid \bn(\bZ)) p_{\textsc{l}}(\bZ\mid \bX),
\end{equation*}
as a way of propagating the linkage uncertainty into population size estimation.  We refer to $p_{\textsc{la}}(N)$ as the linkage-averaged population size posterior.  Here, $p_{\textsc{la}}(N)$ corresponds to the expected posterior distribution of the population size, averaging with respect to the posterior distribution of the coreference partition.  This procedure is intuitively appealing, $p_{\textsc{la}}(N)$ has a clear interpretation, and we now show that $p_{\textsc{la}}(N)$ also corresponds to a proper posterior distribution.  

In principle, if we want to draw inferences jointly on $N$ and $\bZ$ given $\bX$, we need to specify a joint prior $p(N,\bZ)$.  From Condition \ref{c2}, we have that the distribution $p_{\textsc{c}}(N\mid \bn(\bZ))$ would contain our belief on the population size if $\bZ$ was known.  Similarly, from Condition \ref{c1} we have that the prior $p(\bZ)$ contains our prior beliefs on $\bZ$.  Therefore, Conditions \ref{c1} and \ref{c2} imply the joint prior $p(N,\bZ)=p_{\textsc{c}}(N\mid \bn(\bZ))p(\bZ)$.  

\begin{thm}[Bayesian propriety of linkage-averaged population size posterior]\label{validity}
$p_{\textsc{la}}(N)$ is the marginal posterior distribution of $N$ under the likelihood of the linkage model $\mathcal{L}_{\textsc{l}}(\bZ\mid \bX)$ and the joint prior $p_{\textsc{c}}(N\mid\bn(\bZ))p(\bZ)$.
\end{thm}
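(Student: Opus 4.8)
The plan is to write down the joint posterior of $(N,\bZ)$ given $\bX$ implied by the model described just before the theorem, and then marginalize out $\bZ$. Since the linkage likelihood $\mathcal{L}_{\textsc{l}}(\bZ\mid\bX)$ depends only on $\bZ$ and not on $N$ (conditionally on $\bZ$, the data $\bX$ enter only through the linkage, and $\bn(\bZ)$ is a deterministic function of $\bZ$), Bayes' rule applied to the joint prior $p(N,\bZ)=p_{\textsc{c}}(N\mid\bn(\bZ))p(\bZ)$ gives
\[
p(N,\bZ\mid\bX)\;\propto\;\mathcal{L}_{\textsc{l}}(\bZ\mid\bX)\,p_{\textsc{c}}(N\mid\bn(\bZ))\,p(\bZ),
\]
with the proportionality understood over the pair $(N,\bZ)$.

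The key step is to compute the normalizing constant. Summing the right-hand side over all $N$ and over $\bZ\in\mathcal{Z}$ and using Condition \ref{c2} --- specifically that $p_{\textsc{c}}(N\mid\bn(\bZ))$ is a proper posterior distribution in $N$, so that $\sum_N p_{\textsc{c}}(N\mid\bn(\bZ))=1$ for every $\bZ$ --- collapses the inner sum and leaves $\sum_{\bZ}\mathcal{L}_{\textsc{l}}(\bZ\mid\bX)p(\bZ)$. Hence
\[
p(N,\bZ\mid\bX)\;=\;\frac{\mathcal{L}_{\textsc{l}}(\bZ\mid\bX)\,p(\bZ)}{\sum_{\bZ'}\mathcal{L}_{\textsc{l}}(\bZ'\mid\bX)\,p(\bZ')}\;p_{\textsc{c}}(N\mid\bn(\bZ)),
\]
and the first factor is exactly $p_{\textsc{l}}(\bZ\mid\bX)$ by Condition \ref{c1}. (As a sanity check, this already shows that the joint model's marginal posterior for $\bZ$ coincides with the linkage posterior $p_{\textsc{l}}(\bZ\mid\bX)$, so nothing about the linkage inference is altered by adding the population-size layer.) Marginalizing over $\bZ$ then yields
\[
p(N\mid\bX)\;=\;\sum_{\bZ}p(N,\bZ\mid\bX)\;=\;\sum_{\bZ}p_{\textsc{c}}(N\mid\bn(\bZ))\,p_{\textsc{l}}(\bZ\mid\bX),
\]
which is precisely the definition of $p_{\textsc{la}}(N)$ given above the theorem.

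I do not expect a genuine analytic obstacle: once the joint prior $p(N,\bZ)=p_{\textsc{c}}(N\mid\bn(\bZ))p(\bZ)$ is granted, the result is essentially a one-line application of the tower property of conditional expectation. The only points needing care are (i) making explicit that the joint likelihood factors as $\mathcal{L}(N,\bZ\mid\bX)=\mathcal{L}_{\textsc{l}}(\bZ\mid\bX)$, i.e.\ that $\bX$ carries no information about $N$ beyond what is encoded in $\bZ$ --- which is exactly the modeling stance of treating $\bX$ as linkage data and $\bn(\bZ)$ as a function of $\bZ$; and (ii) checking that the normalization is legitimate, i.e.\ that $\sum_{\bZ}\mathcal{L}_{\textsc{l}}(\bZ\mid\bX)p(\bZ)$ is finite and strictly positive, which holds because $\mathcal{Z}$ is finite and $p_{\textsc{l}}(\bZ\mid\bX)$ is assumed proper in Condition \ref{c1}. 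If one prefers to allow $N$ to range over an infinite set, one should additionally note that interchanging the sums over $N$ and $\bZ$ is justified by nonnegativity (Tonelli), which is immediate here.
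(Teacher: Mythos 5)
Your proof is correct and follows essentially the same route as the paper's: write the joint posterior under the prior $p_{\textsc{c}}(N\mid\bn(\bZ))p(\bZ)$, collapse the normalizing constant using $\sum_N p_{\textsc{c}}(N\mid\bn(\bZ))=1$ to identify the factor $p_{\textsc{l}}(\bZ\mid\bX)$, and marginalize over $\bZ$. The extra remarks on the likelihood factorization and the legitimacy of the normalization are fine but not needed beyond what the paper already does.
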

\begin{proof}
The joint posterior of $N$ and $\bZ$ is 
$p(N,\bZ\mid \bX) \propto \linebreak \mathcal{L}_{\textsc{l}}(\bZ\mid \bX) p_{\textsc{c}}(N\mid \bn(\bZ)) p(\bZ)$, where the inverse of the proportionality constant is $\sum_{\bZ}\sum_{N}\mathcal{L}_{\textsc{l}}(\bZ\mid \bX)p_{\textsc{c}}(N\mid \bn(\bZ))p(\bZ) =\sum_{\bZ}\mathcal{L}_{\textsc{l}}(\bZ\mid \bX)p(\bZ)$, since $\sum_{N}p_{\textsc{c}}(N\mid \bn(\bZ))=1$.  Given that 
$p_{\textsc{l}}(\bZ\mid \bX)\propto\mathcal{L}_{\textsc{l}}(\bZ\mid \bX) p(\bZ)$ with the inverse of the proportionality constant being $\sum_{\bZ}\mathcal{L}_{\textsc{l}}(\bZ\mid \bX) p(\bZ)$, we can therefore write $p(N,\bZ\mid \bX)=p_{\textsc{c}}(N\mid \bn(\bZ)) p_{\textsc{l}}(\bZ\mid \bX)$.  Then,
$p(N\mid \bX)  = \sum_{\bZ} p(N,\bZ\mid \bX) = \sum_{\bZ} p_{\textsc{c}}(N\mid \bn(\bZ)) p_{\textsc{l}}(\bZ\mid \bX) = p_{\textsc{la}}(N)$.
\end{proof}

Furthermore, the total variability represented by $p_{\textsc{la}}(N)$ can be decomposed as
\begin{align}\label{eq:var_decomp}
Var(N\mid \bX)   = \ & 
    Var_{\bZ\mid \bX}[E(N\mid \bZ)] 
    +\ E_{\bZ\mid \bX}[Var(N\mid \bZ)],
\end{align}
where the first term on the right hand side can be seen as the contribution of the linkage uncertainty on the population size variability, and the second term summarizes the variability that is intrinsic to Bayesian approaches for estimating $N$. 

In practice, we generally will  have to approximate  $p_{\textsc{la}}(N)$ and the variance components in \eqref{eq:var_decomp} using posterior draws from $p_{\textsc{l}}(\bZ\mid \bX)$, as explained in Section \ref{ss:lapse} and Equations \eqref{eq:LA1} and \eqref{eq:LA2}.

\subsection{Linkage and Capture-Recapture Model Uncertainty}

The capture-recapture model used above could be, for example, an individual decomposable graphical model, as presented in Section \ref{ss:graphicalmodels}, but it could also be the average of them as in \cite{MadiganYork97}, in which case $p_{\textsc{c}}(N\mid \bn(\bZ))$ would be given by \eqref{eq:PN_n}.  In fact, in Section \ref{s:ElSalvadorGeneralApplication} we present an application of our linkage-averaging strategy using the model of \cite{MadiganYork97} and the  Bayesian partitioning approach of \cite{Sadinle14}.  In fact, under the methodology of \cite{MadiganYork97} we can actually write the linkage-averaged posterior of $N$ as
\begin{equation}\label{eq:pNm}
p_{\textsc{la}}(N)=\sum_{\bZ}\sum_m p_{\textsc{c}}(N\mid \bn(\bZ),m) p_{\textsc{c}}(m\mid \bn(\bZ)) p_{\textsc{l}}(\bZ\mid \bX),
\end{equation}
where $m$ ranges over the non-saturated decomposable graphical models for the contingency table, and 
\begin{equation*}
p_{\textsc{c}}(m\mid \bn(\bZ)) = \frac{p(m)\sum_{N}\mathcal{L}_m(N\mid \bn(\bZ))p(N)}{\sum_{m} \sum_N \mathcal{L}_m(N\mid \bn(\bZ))p(N)p(m)},
\end{equation*}
with 
$\mathcal{L}_m(N\mid \bn(\bZ))=P(\bn(\bZ)\mid N,m)$ as given by \eqref{eq:Pn_Nm}.  Expression \eqref{eq:pNm} explicitly shows the contribution of coreference uncertainty and capture-recapture model uncertainty on the overall population size posterior.  
In fact,  we can decompose the overall posterior variance as follows
\begin{align}\label{eq:var_decomp2}
Var(N\mid \bX)   = \ & 
    Var_{\bZ\mid \bX}[E(N\mid \bZ)] 
    +\ E_{\bZ\mid \bX}\{Var_{m\mid \bZ}[E(N\mid \bZ,m)]\} \\
		& +\ E_{\bZ\mid \bX}\{E_{m\mid \bZ}[Var(N\mid \bZ,m)]\} \nonumber,
\end{align}
where the first term of the sum can be directly attributed to linkage uncertainty, the second term to model uncertainty in the population size estimation stage, and the remaining variability is intrinsic from Bayesian approaches to estimate $N$. 

We note that in principle we could also average the models of \cite{MadiganYork97} with that of \cite{Manrique16}, or with any other that satisfies the conditions discussed in Section \ref{s:pop_size}, but we do not pursue that here.

\subsection{Implications for Model Exploration and Data Confidentiality Protection}

The strategy presented here allows the linkage and the population size estimation to be carried out in two separate stages, while still leading to proper Bayesian inferences.  This has important practical implications, as the linkage can be performed as if it was the final goal, the population size estimation is standard given each coreference partition, and the combination of the two stages  through linkage-averaging is simple.

Regarding model exploration, in principle an analyst would have to obtain a new posterior $p(N,\bZ\mid \bX)$ for each different capture-recapture model being considered.  For example, the approaches of \cite{TancrediLiseo11} and \cite{LiseoTancredi11} rely on a specific capture-recapture model in the case of two lists.  Under our approach, however, we can reuse the results from the linkage step to obtain  different linkage-averaged estimates for each different capture-recapture model.  Theorem \ref{validity} implies that each such linkage-averaged posterior corresponds to a proper posterior distribution.  Not having to re-do the linkage for each different capture-recapture model is certainly an important practical advantage.  

Our two-stage strategy also indicates that the linkage and the population size estimation can be done by different analysts.  This is relevant in contexts where one needs to protect the confidentiality of the lists and the privacy of the individuals, given that the linkage can be carried out by a small trusted team, and then the linkage results, in the form of draws from $p_{\textsc{l}}(\bZ\mid \bX)$, can be transferred to subsequent analysts without having to reveal personally identifiable information used for the linkage.  


\section{Estimating Mortality Levels in the Salvadoran Civil War}\label{s:ElSalvadorGeneralApplication}

A common goal in quantitative human rights research is to estimate the total number of civilian casualties that occurred during a war.  For this purpose, multiple-systems estimation is frequently used after different lists of casualties are combined via record linkage techniques, but typically the linkage uncertainty is ignored.  \cite{LumPriceBanks13} provide a comprehensive review of such applications.  In this section we study a case from the civil war that the Central American republic of El Salvador endured between 1980 and 1991.  Our goal is to combine three data sources on civilian killings that were collected by three different organizations, and then use those results to obtain different multiple-systems estimates of the total number of civilian killings.  We focus on the region ({\em departamento}) of the capital city, San Salvador.

\subsection{Description of the Datafiles}

The first two datafiles that we consider contain reports on civilian killings collected {\em during} the civil war.  The first data source was put in electronic form by the Los Angeles-based nongovernmental organization \emph{El Rescate}, from reports that had been published periodically during the civil war by the project \emph{Tutela Legal} of the Archdiocese of San Salvador \citep{Howland08}.  We refer to this source as \textit{El Rescate / Tutela Legal} (ER-TL, 1364 records from San Salvador). The second data source comes from the \textit{Salvadoran Human Rights Commission} ({\em Comisi\'on de Derechos Humanos de El Salvador --- CDHES}, 285 records from San Salvador), which directly collected testimonials on human rights violations between 1979 and 1991 \citep{Ball00ElSalvador}.  For both datafiles, the characteristics of their collection make us believe that they should contain only small amounts of duplication, if any \citep{Sadinle17}.  

The third datafile was collected by the \textit{United Nations Truth Commission for El Salvador} (UNTC, 440 records from San Salvador),
 between 1992 and 1993, {\em after} the civil war ended \citep{CTElSalvador93}.  Given that most of the reports to the UNTC refer to killings that occurred several years before 1992, it is reasonable to expect the information in this datafile to be less reliable compared with ER-TL and CDHES, since individuals reporting casualties might not have had accurate recollections of the time and place of the events.  Non-trivial duplicates arise in this datafile from reports of multiple family members and acquaintances of a single victim.

\subsection{Record Linkage and Duplicate Detection}\label{ss:ElSalvador_linkage}

\subsubsection{Datafile Standardization, Filtering Non-Coreferent Pairs, and Comparison Data}

The three datafiles used in this section have the following fields in common: given and family names, date and place/municipality of death.  
Our standardization of names and construction of comparison data are as described in the application of \cite{Sadinle14}.  Table \ref{t:compdata} summarizes the construction of levels of disagreement.  Since the datafiles are small enough,  we computed comparison data for all $\binom{2,089}{2} = $ 2,180,916 record pairs (the set $\mathcal{P}$).  We then formed the set $\mathcal{C}$ of candidate coreferent pairs by fixing as non-coreferent all the pairs that have disagreement level three in either given name or family name.  This leads to only $|\mathcal{C}|=699$ candidate pairs, which involve only 775 records. 

\begin{table}[t]
\centering
\footnotesize{
  \begin{minipage}[b]{1\textwidth}
 \caption{Construction of levels of disagreement for lists from El Salvador.}\label{t:compdata}
\centering
\begin{tabular}{lccccc}
  \hline\\[-8pt]
       &                     & \multicolumn{4}{c}{Levels of Disagreement}\\
          \cline{3-6}\\ [-6pt]
  Field & Similarity Measure & $0$ & $1$ & $2$ & $3$ \\
  \hline\\[-8pt]
  Given Name & Normalized Levenshtein\footnote{\label{note1} Modification of \cite{Sadinle14} to account for Hispanic naming conventions.} &  0 & $(0,0.25]$ & $(0.25,0.5]$ & $(0.5,1]$ \\
  Family Name & Normalized Levenshtein\footnoteref{note1} & 0 & $(0,0.25]$ & $(0.25,0.5]$ & $(0.5,1]$ \\
  Year of Death & Absolute Difference &  0 & 1 & 2--3 & 4+ \\
  Month of Death& Absolute Difference & 0 & 1 & 2--3 & 4+ \\
  Day of Death & Absolute Difference &  0 & 1--2 & 3--7 & 8+ \\
  Place of Death & Binary Comparison & Agree & Disagree &&\\
  \hline
\end{tabular}
\end{minipage}
}
\end{table}

\subsubsection{Prior Specification}

We followed the general guidelines presented in Section \ref{ss:comparison-based} and used uniform priors on $[0,1]$ for all the $u_{fl}$ parameters.  For the $m_{fl}$ parameters, we used flat priors in the intervals $[\lambda_{fl},1]$ for the truncation points given in Table \ref{t:prior_3files}.  These priors indicate our belief that coreferent pairs are very likely to have exact agreements, although we still expect a considerable amount of error in the fields.    Finally, the prior for the field day of death has low truncation points in general,  since we believe this field to be unreliable.  

\begin{table}[h]
\centering
\footnotesize{
  \begin{minipage}[b]{1\textwidth}
 \caption{Prior truncation points $\lambda_{fl}$ for the $m_{fl}$ parameters in the joint duplicate detection and record linkage for three datafiles from El Salvador.}\label{t:prior_3files}
\centering
\begin{tabular}{clccccccc}
  \hline\\[-8pt]
&       &    \multicolumn{2}{c}{Name} && \multicolumn{3}{c}{Date of Death} & \\
          \cline{3-4}\cline{6-8}\\ [-8pt]
& $l$  & Given	& Family && Year & Month & Day & Municipality \\
  \hline\\[-8pt]
& 0 &0.95 &	 0.95 &&	 0.90 &	 0.80 &	 0.70 &	 0.80\\ 
& 1 &0.99 &	 0.99 &&	 0.95 &	 0.90 &	 0.70 &	   --  \\
& 2 &0.99 &	 0.99 &&	 0.99 &	 0.99 &	 0.70	&    --  \\
\\[-8pt]
  \hline
\end{tabular}
\end{minipage}
}
\end{table}

\subsubsection{Gibbs Sampler Implementation}\label{ss:Gibbs_DDRL}

We ran 10,000 iterations of the Gibbs sampler of \cite{Sadinle14}.  The runtime using an implementation in R with parts written in C language was of 35 seconds on a laptop with a processor Intel Core i7-4900MQ.  Convergence of the chain was checked using functions of the partitions.  We found the  number of killings reported 1, 2, and 3 times according to each partition in the chain.  
The traceplots of these chains (not shown here) indicate that they seem to have converged rather quickly, and their autocorrelation functions indicate that  there are not large autocorrelations in the chain.  Similar results were obtained when we explored the number of different killings in the datafiles according to the partitions in the chain.  Based on these diagnostics we discarded the first 1,000 iterations and kept one draw each five iterations.  After this thinning, the autocorrelation plots (not shown here) did not suggest the existence of remaining autocorrelations of any order.  For each of the previously explored chains we also computed Geweke's convergence diagnostic as implemented in the R package \verb"coda" \citep{CODA}.  The Geweke's Z-scores indicated that it is reasonable to treat these chains as drawn from their stationary distributions.  We also explored the marginal probabilities that pairs of records are coreferent for the pairs in the set $\mathcal{C}$ of candidate pairs.  For each pair in $\mathcal{C}$, and for each partition in the chain, we checked whether the pair appeared together in the partition.  For each of these binary chains we computed Geweke's convergence diagnostic, and we found that all the Z-scores range around the usual values of a standard normal random variable, which indicates that it is reasonable to assume that these chains were obtained from their stationary distributions.

\subsection{Linkage-Averaged Posterior Estimates of the Total Number of Killings}\label{s:PopSize_ElSalvador}

The draws from the posterior of the coreference partition can be directly used to obtain inferences on different quantities of interest.  For example, computing the size of each partition gives us posterior draws of the number of different reported killings, which in this case lead to a 99\% credible interval of [1892, 1906], and a posterior mean of 1900.  This can be seen as an estimated lower bound on the total number of killings.  In Table \ref{t:CTuncertain} we also present the marginal posterior distribution of number of killings following each of the different inclusion patterns, $n_{111}, \dots, n_{100}$.  The remainder of the section is devoted to using the posterior draws of the coreference partition to derive estimates of the total number of killings using different capture-recapture models. 

\begin{table}[h]
 \centering
  \begin{minipage}[b]{1\textwidth}
 \def\~{\hphantom{0}}
  \caption{Marginal posterior distributions of the frequencies of inclusion patterns.} \label{t:CTuncertain}
  \begin{tabular*}{\columnwidth}{c@{\extracolsep{\fill}}c@{\extracolsep{\fill}}c@{\extracolsep{\fill}}c@{\extracolsep{\fill}}c@{\extracolsep{\fill}}c@{\extracolsep{\fill}}c@{\extracolsep{\fill}}}
\hline\\ [-8pt]
   &  & \multicolumn{2}{c}{In UNTC}  & & \multicolumn{2}{c}{Out UNTC} \\
 \cline{3-7}\\ [-8pt]
 ER-TL & & In CDHES & Out CDHES &  & In CDHES & Out CDHES\\
	\cline{1-7} \\ [-8pt]
  In & & \begin{tabular}{c}\includegraphics[width=0.15\columnwidth]{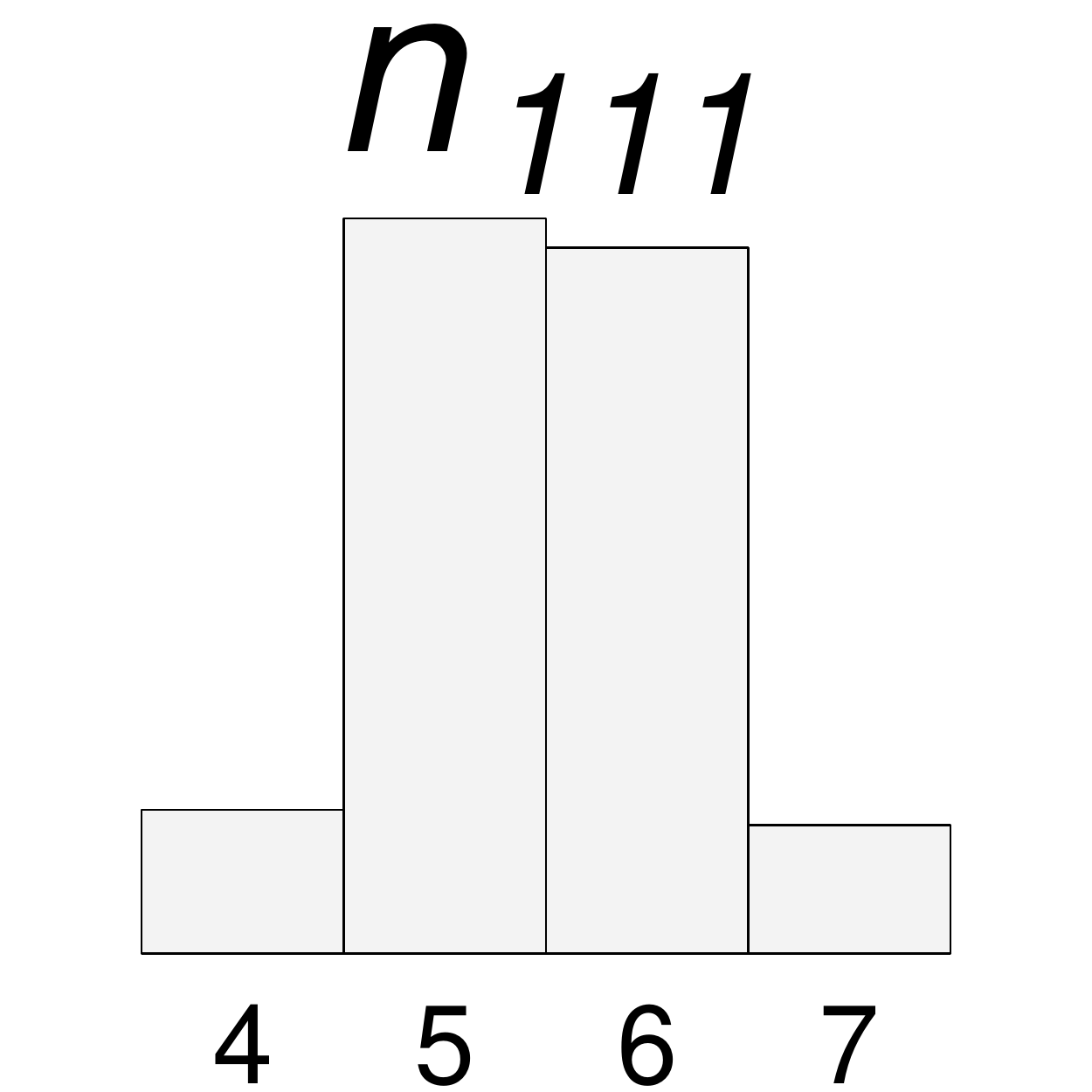}\end{tabular} & \begin{tabular}{c}\includegraphics[width=0.15\columnwidth]{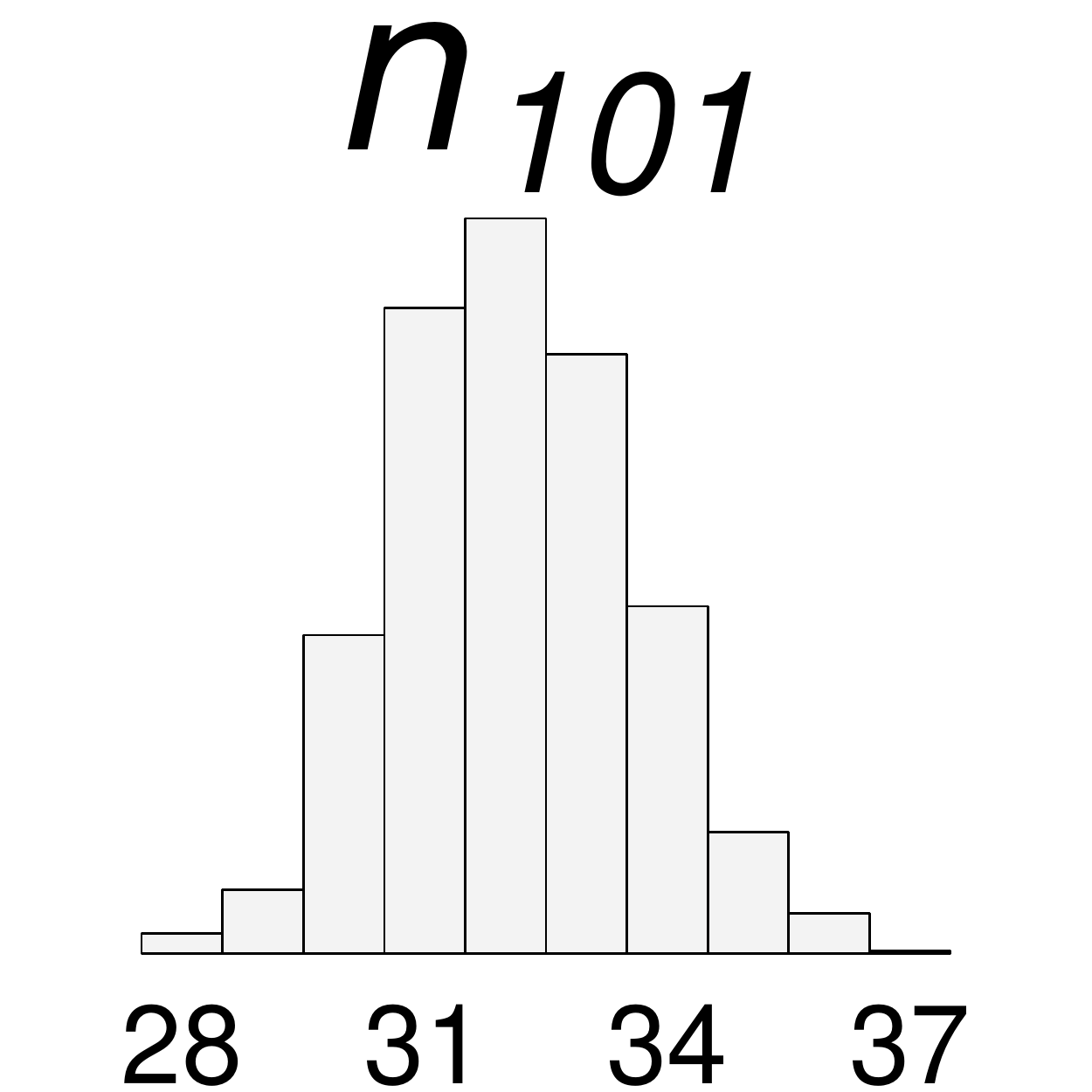}\end{tabular} & & \begin{tabular}{c}\includegraphics[width=0.15\columnwidth]{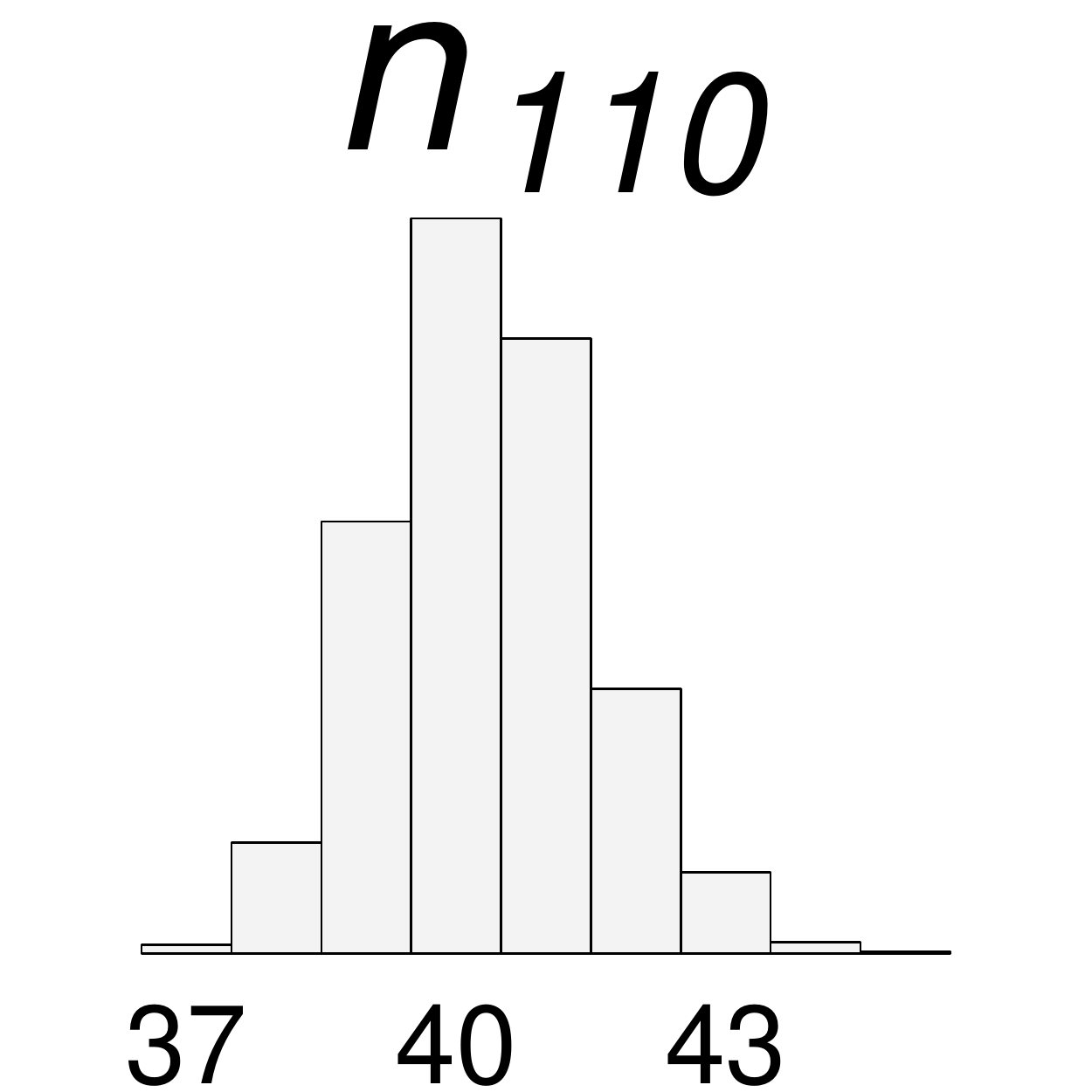}\end{tabular} & \begin{tabular}{c}\includegraphics[width=0.15\columnwidth]{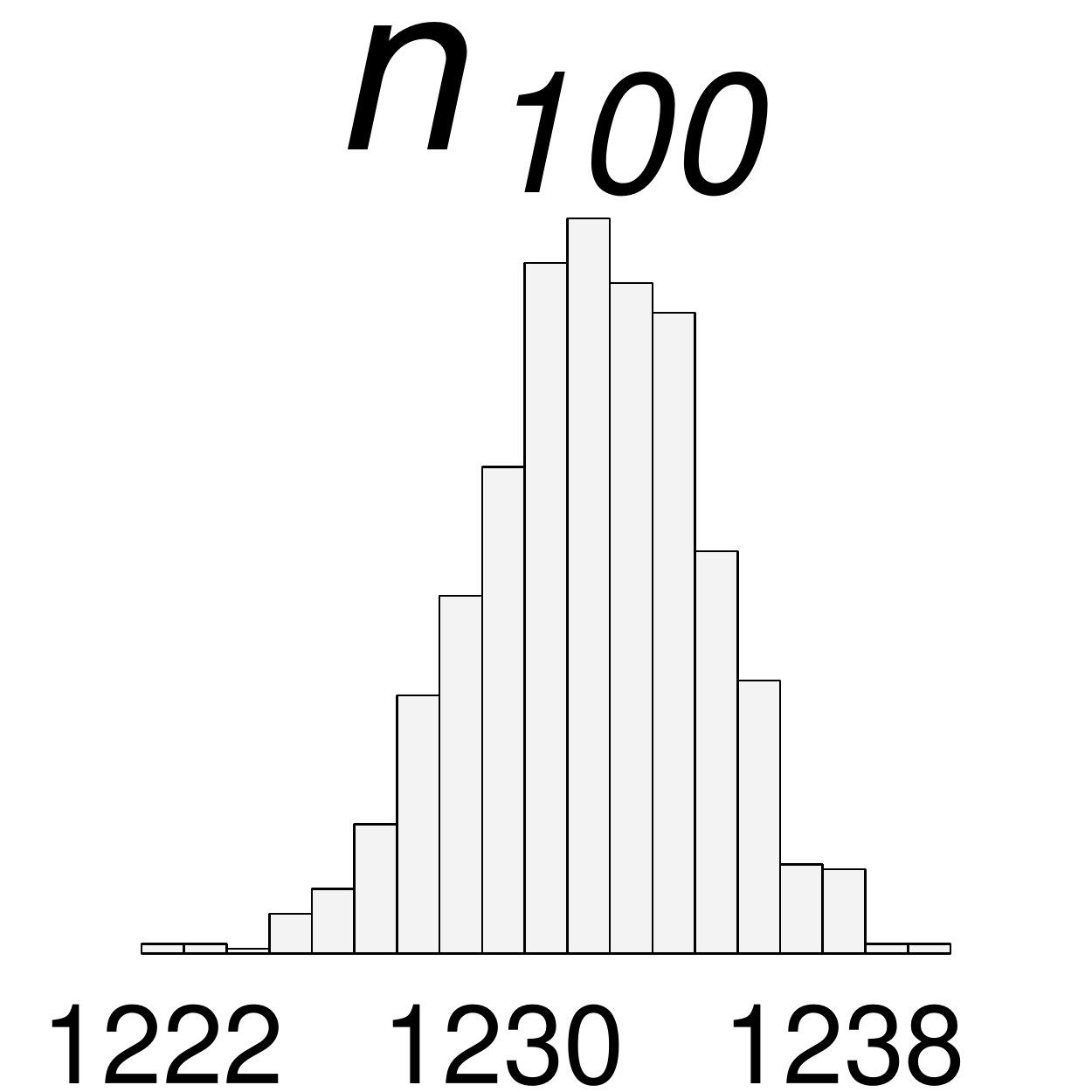}\end{tabular} \\
  Out & & \begin{tabular}{c}\includegraphics[width=0.15\columnwidth]{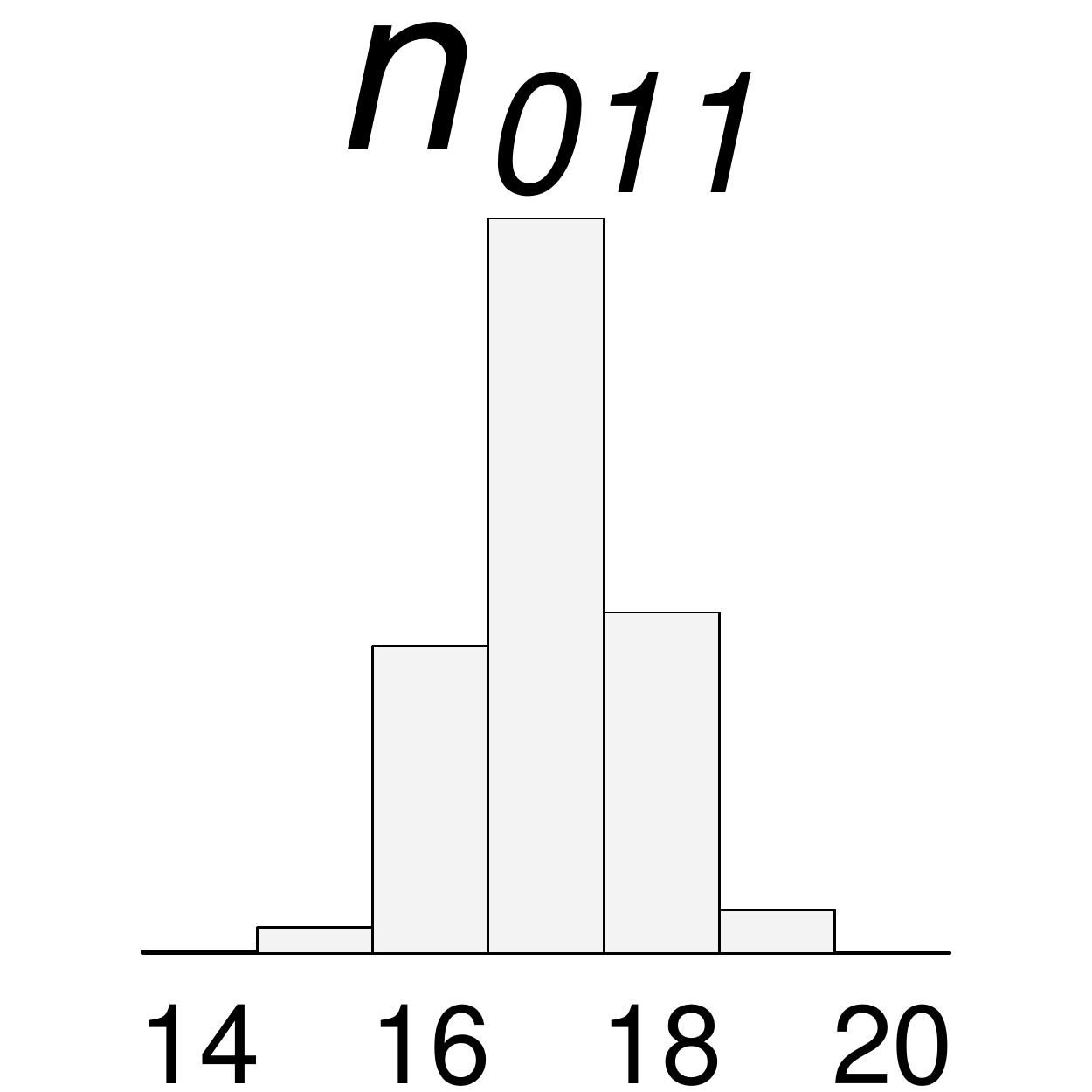}\end{tabular} & \begin{tabular}{c}\includegraphics[width=0.15\columnwidth]{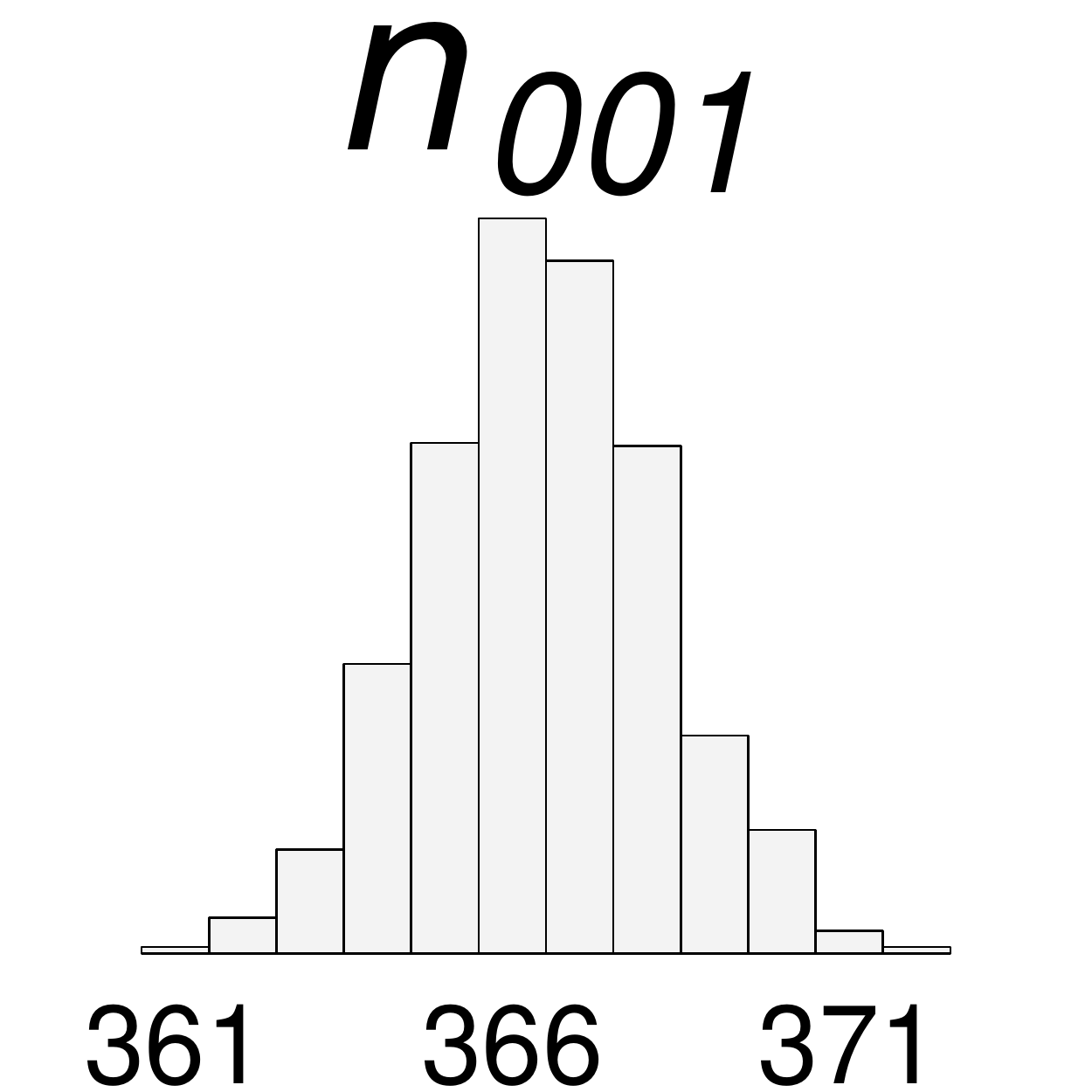}\end{tabular} & & \begin{tabular}{c}\includegraphics[width=0.15\columnwidth]{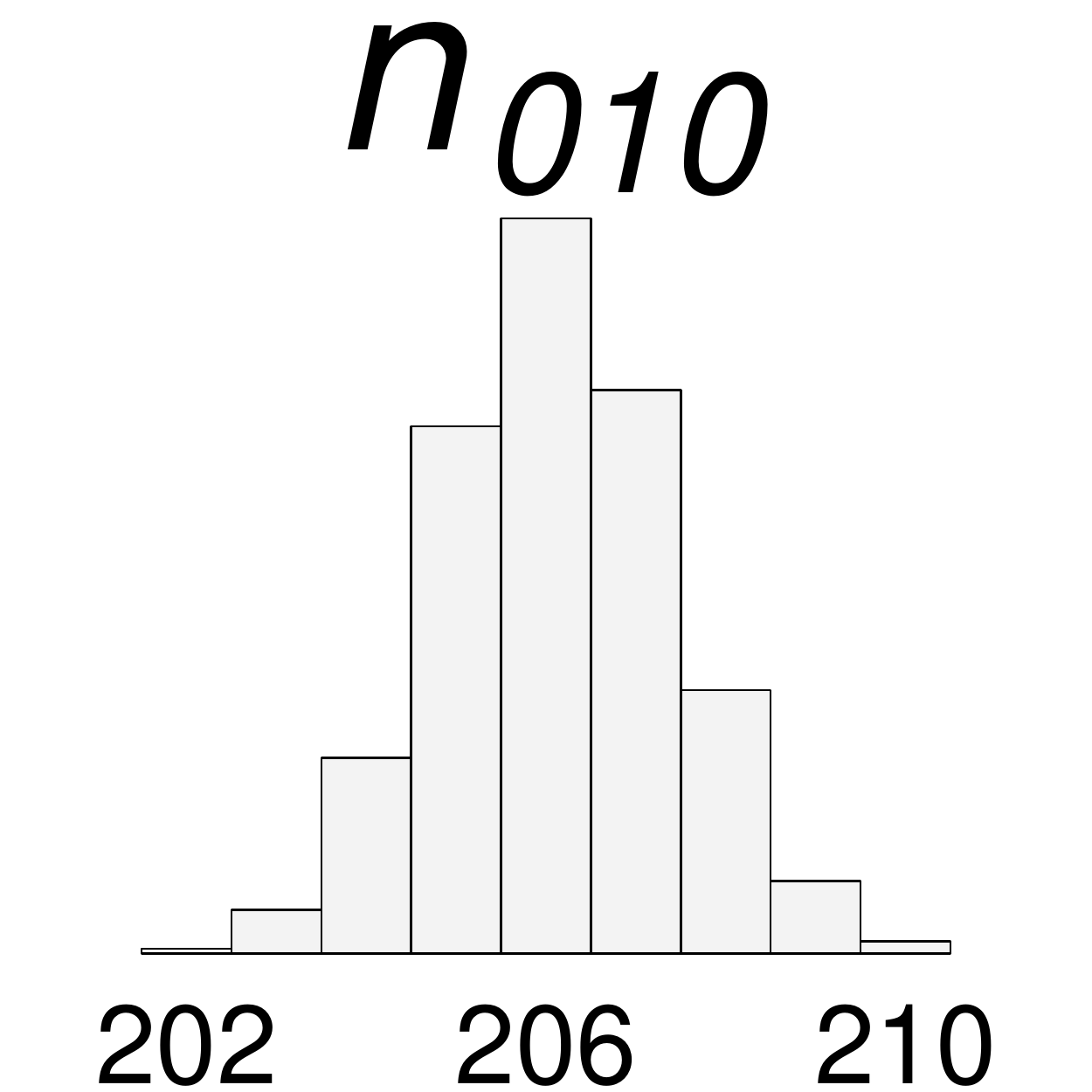}\end{tabular} & $-$ \\[-1pt]\hline
\end{tabular*}\vskip18pt
\end{minipage}
\end{table}

\subsubsection{Two-Sample Estimates}

In Section \ref{s:lapse} we mentioned that the subsequent capture-recapture model does not necessarily have to use all the lists combined in the linkage step.  For example, the linkage step may have included datafiles whose collection make the assumptions in the capture-recapture model implausible.  An example in the context of our application would be a list of the victims that belonged to a given organization; in that case, non-members of the organization would have zero probability of being included in the list, by definition. 

In this section, we use the results of the linkage step to derive estimates of the population size based only on the inclusion patterns for pairs of lists.  For example, using only the first two data sources to estimate the population size, we need to compute $n_{11+}(\bZ)=n_{110}(\bZ)+n_{111}(\bZ)$, $n_{10+}(\bZ)=n_{100}(\bZ)+n_{101}(\bZ)$, and $n_{01+}(\bZ)=n_{010}(\bZ)+n_{011}(\bZ)$, for each coreference partition labeling $\bZ$ in the posterior sample from the linkage step, and use these as sufficient statistics for the capture-recapture model.  This modeling approach does not take advantage of the additional piece of information $n_{001}(\bZ)$.  With only two sources, we are limited to the capture-recapture model that assumes independence of the inclusion of the victims in the data sources, as the counts $n_{11+}$, $n_{10+}$, and $n_{01+}$ do not contain enough information to estimate this  dependence.  A possible alternative would be to pre-specify a degree of dependence between the sources, for example as discussed in \cite{EricksenKadaneTukey89}, but we do not pursue that avenue here.  In the model of independence, the modeling approach of \cite{MadiganYork97} corresponds to the approach of \cite{Castledine81}.  

In Table \ref{t:2sample} we present summaries of each linkage-averaged posterior of $N$ obtained using the different possible pairs of datafiles for the estimation of the population size $N$. The fifth column in that table shows the percentage contribution of the linkage variability towards the overall posterior variability of the population size, derived from \eqref{eq:var_decomp}.  For some of the models this contribution can be quite small, meaning that in such cases obtaining a posterior estimate of the inclusion patterns' frequencies and fixing those to estimate $N$ would lead to similar inferences compared with those from linkage-averaging.  However, we only obtain this information {\em after} we compute the variance decomposition in \eqref{eq:var_decomp}.

\begin{table}[h]
\centering
  \begin{minipage}[b]{1\textwidth}
  \caption{Linkage-averaging for two-sample estimates of $N$. $\hat N$: expected value computed from $p_{\textsc{la}}(N)$. CI: credible interval.  The plots in the second column have the same horizontal and vertical scales.}
\label{t:2sample}\end{minipage} 
\begin{tabular}{ccccc}
\hline\\ [-8pt]
 & \multicolumn{4}{c}{Linkage-Averaging}\\
          \cline{2-5}\\ [-8pt]
Lists & $p_{\textsc{la}}(N)$ & $\hat N$ & 99\% CI & Linkage Var.\\
\hline
ER-TL, CDHES &\begin{tabular}{c}\includegraphics[trim = 0cm 0cm 0cm 0cm, width=0.25\columnwidth]{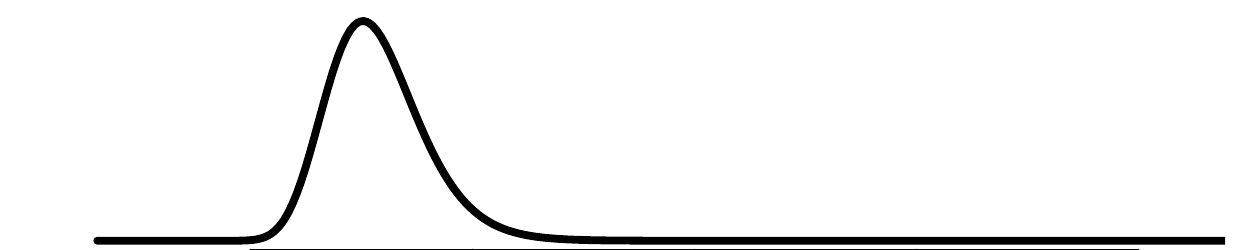}\end{tabular}& 7,852 &[5613, 11367]& 5.06\% \\
ER-TL, UNTC  &\begin{tabular}{c}\includegraphics[trim = 0cm 0cm 0cm 0cm, width=0.25\columnwidth]{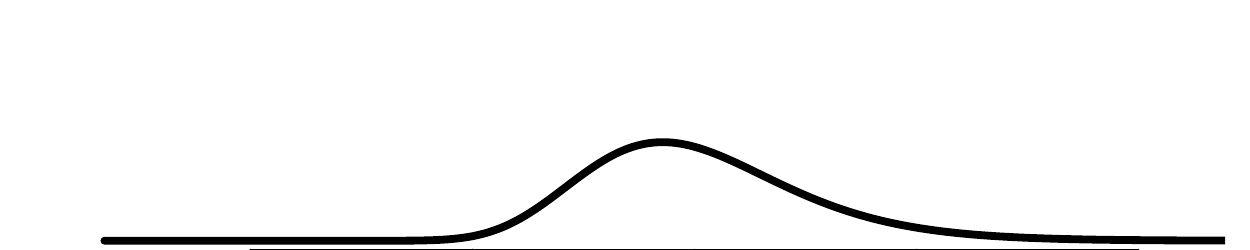}\end{tabular}& 15,082 &[10156, 23218]& 6.68\% \\
CDHES, UNTC  &\begin{tabular}{c}\includegraphics[trim = 0cm 0cm 0cm 0cm, width=0.25\columnwidth]{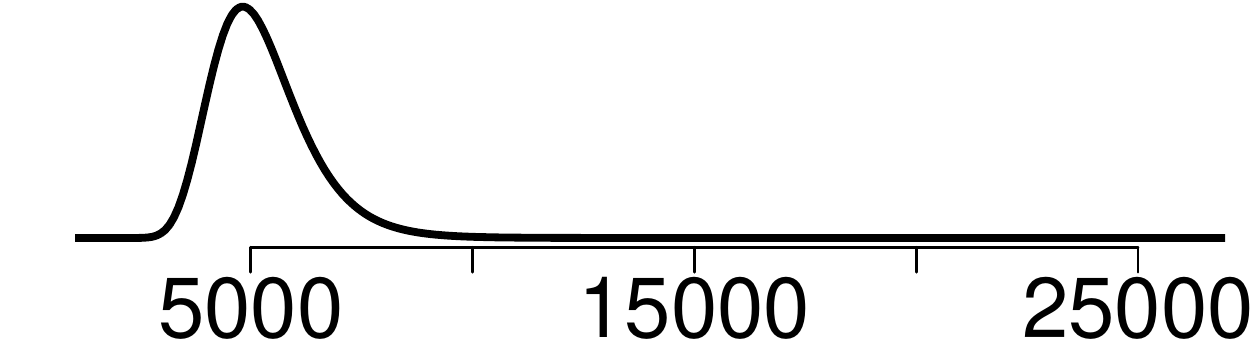}\end{tabular}& 5253 &[3212, 9113]& 2.74\%\\
\hline
\end{tabular}
\end{table}

\subsubsection{Three-Sample Estimates from Individual Graphical Models}

We now obtain estimates of $N$ based on each of the individual graphical models presented in Section \ref{ss:graphicalmodels}.  Fixing one such model to estimate $N$ could arise in a context where one can conjecture the dependence graph based on domain knowledge, such as knowledge of collaboration, affinity, or antagonism between institutions collecting the data. 

We summarize the linkage-averaged posteriors obtained using each individual graphical model in Table \ref{t:3var_graphs}.  Similarly as for the two-sample estimates, we can see that the relative contribution of the linkage uncertainty towards the posterior uncertainty around $N$ can be quite small, meaning that the importance of accounting for linkage uncertainty ends up depending on the specific model.  Unfortunately, there does not seem to be a way to tell in advance if the linkage uncertainty is going to have a big impact on the estimation of $N$.  

\begin{table}[h]
\centering
  \begin{minipage}[b]{1\textwidth}
  \caption{Summaries of linkage-averaging for three-sample population size estimates using individual graphical models. $\hat N$: expected value computed from $p_{\textsc{la}}(N)$. CI: credible interval.  The plots in the third column have the same horizontal and vertical scales.  The data sources are 1: ER-TL, 2: CDHES, 3: UNTC.}
\label{t:3var_graphs}\end{minipage} 
\begin{tabular}{ccccccc}
\hline\\ [-8pt]
\multicolumn{2}{c}{Graphical Model} && \multicolumn{4}{c}{Linkage-Averaging}\\
\cline{1-2}          \cline{4-7}\\ [-8pt]
Notation & Graph && $p_{\textsc{la}}(N)$ & $\hat N$ & 99\% CI & Linkage Var.\\
\hline\\ [-8pt]
$[1][2][3]$&  
\begin{tabular}{c}
\begin{tikzpicture}[thick,every node/.style={draw,circle,scale=.4}]
\node (1) {\textbf{1}};
\node (2) [right = 0.15cm of 1] {\textbf{2}};
\node (3) [below = 0.1cm of 2,xshift=-.5cm] {\textbf{3}};
\end{tikzpicture}\end{tabular} 
&\multicolumn{2}{c}{\begin{tabular}{c}\includegraphics[width=0.2\columnwidth]{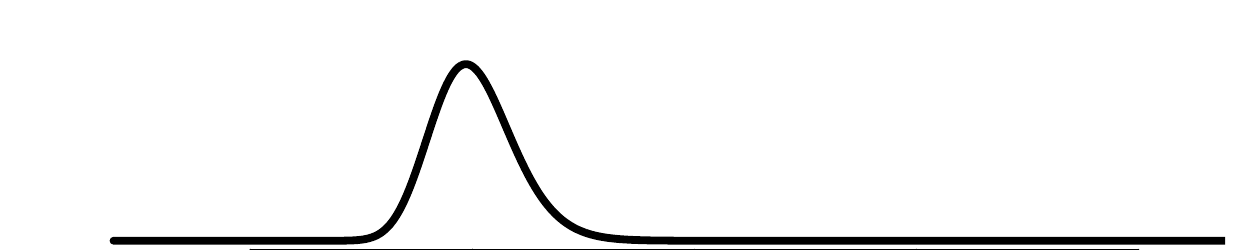}\end{tabular}}& 10041& [7967, 12847]& 4.84\% \\
$[1,2][3]$&
\begin{tabular}{c}
\begin{tikzpicture}[thick,every node/.style={draw,circle,scale=.4}]
\node (1) {\textbf{1}};
\node (2) [right = 0.15cm of 1] {\textbf{2}};
\node (3) [below = 0.1cm of 2,xshift=-.5cm] {\textbf{3}};
\draw (1) -- (2);
\end{tikzpicture}\end{tabular}	
&\multicolumn{2}{c}{\begin{tabular}{c}\includegraphics[width=0.2\columnwidth]{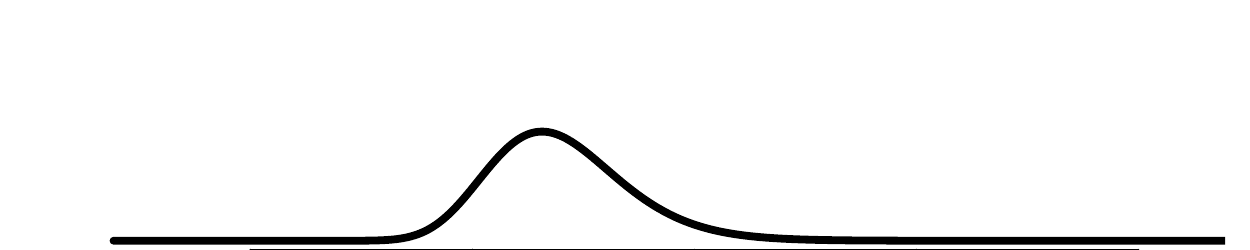}\end{tabular}}& 11986& [8743, 16881]& 5.02\% \\
$[1,3][2]$&
\begin{tabular}{c}
\begin{tikzpicture}[thick,every node/.style={draw,circle,scale=.4}]
\node (1) {\textbf{1}};
\node (2) [right = 0.15cm of 1] {\textbf{2}};
\node (3) [below = 0.1cm of 2,xshift=-.5cm] {\textbf{3}};
\draw (1) -- (3);
\end{tikzpicture}\end{tabular}
&\multicolumn{2}{c}{\begin{tabular}{c}\includegraphics[width=0.2\columnwidth]{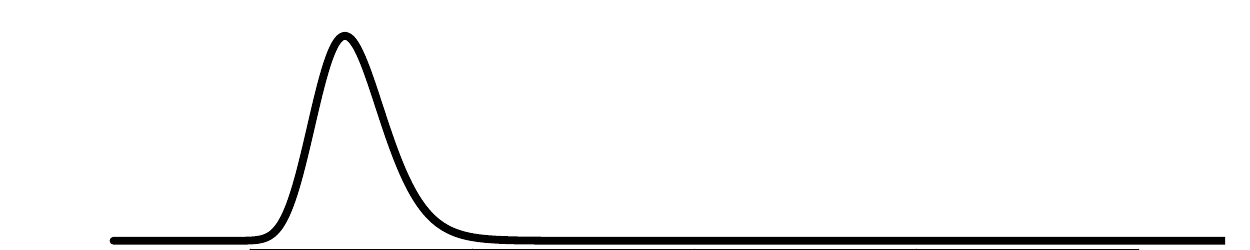}\end{tabular}}& 7331& [5588, 9881]& 3.53\%\\
$[1][2,3]$&
\begin{tabular}{c}
\begin{tikzpicture}[thick,every node/.style={draw,circle,scale=.4}]
\node (1) {\textbf{1}};
\node (2) [right = 0.15cm of 1] {\textbf{2}};
\node (3) [below = 0.1cm of 2,xshift=-.5cm] {\textbf{3}};
\draw (2) -- (3);
\end{tikzpicture}\end{tabular}
&\multicolumn{2}{c}{\begin{tabular}{c}\includegraphics[width=0.2\columnwidth]{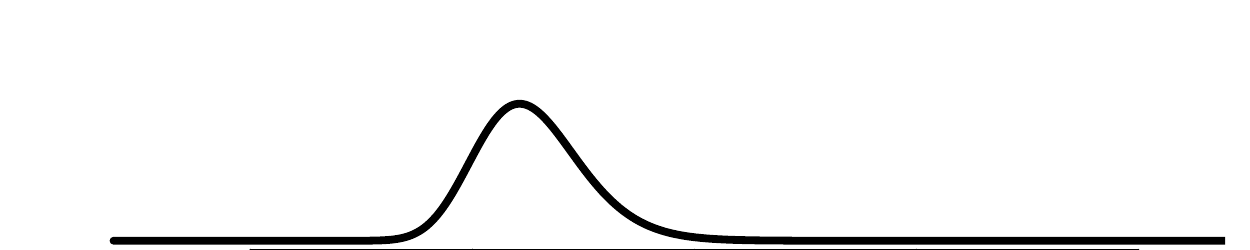}\end{tabular}}& 11339& [8702, 15061]& 5.45\%\\
$[1,2][1,3]$&
\begin{tabular}{c}
\begin{tikzpicture}[thick,every node/.style={draw,circle,scale=.4}]
\node (1) {\textbf{1}};
\node (2) [right = 0.15cm of 1] {\textbf{2}};
\node (3) [below = 0.1cm of 2,xshift=-.5cm] {\textbf{3}};
\draw (1) -- (2);
\draw (3) -- (1);
\end{tikzpicture}\end{tabular}
&\multicolumn{2}{c}{\begin{tabular}{c}\includegraphics[width=0.2\columnwidth]{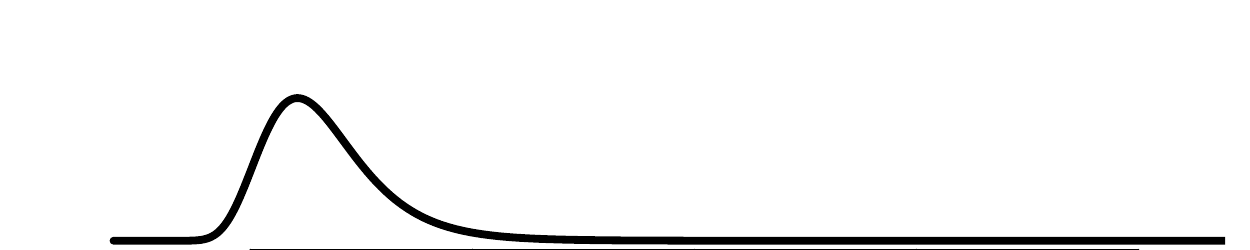}\end{tabular}}& 6637& [4319, 11512]& 3.75\%\\
$[1,2][2,3]$&
\begin{tabular}{c}
\begin{tikzpicture}[thick,every node/.style={draw,circle,scale=.4}]
\node (1) {\textbf{1}};
\node (2) [right = 0.15cm of 1] {\textbf{2}};
\node (3) [below = 0.1cm of 2,xshift=-.5cm] {\textbf{3}};
\draw (1) -- (2);
\draw (2) -- (3);
\end{tikzpicture}\end{tabular}
&\multicolumn{2}{c}{\begin{tabular}{c}\includegraphics[width=0.2\columnwidth]{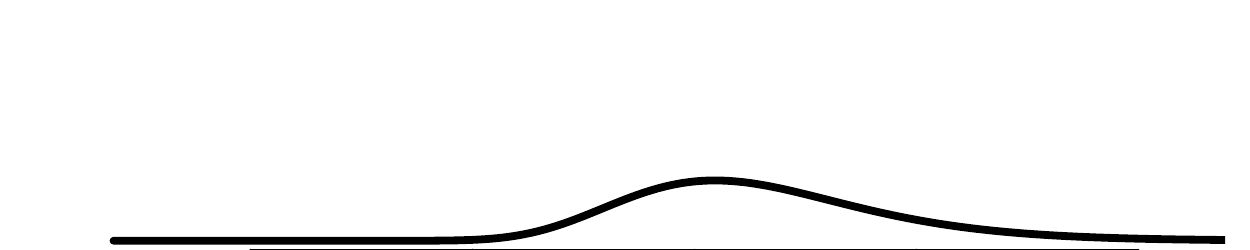}\end{tabular}}& 16466& [10800, 26043]& 7.09\%\\
$[1,3][2,3]$&
\begin{tabular}{c}
\begin{tikzpicture}[thick,every node/.style={draw,circle,scale=.4}]
\node (1) {\textbf{1}};
\node (2) [right = 0.15cm of 1] {\textbf{2}};
\node (3) [below = 0.1cm of 2,xshift=-.5cm] {\textbf{3}};
\draw (2) -- (3);
\draw (3) -- (1);
\end{tikzpicture}\end{tabular}
&\multicolumn{2}{c}{\begin{tabular}{c}\includegraphics[width=0.2\columnwidth]{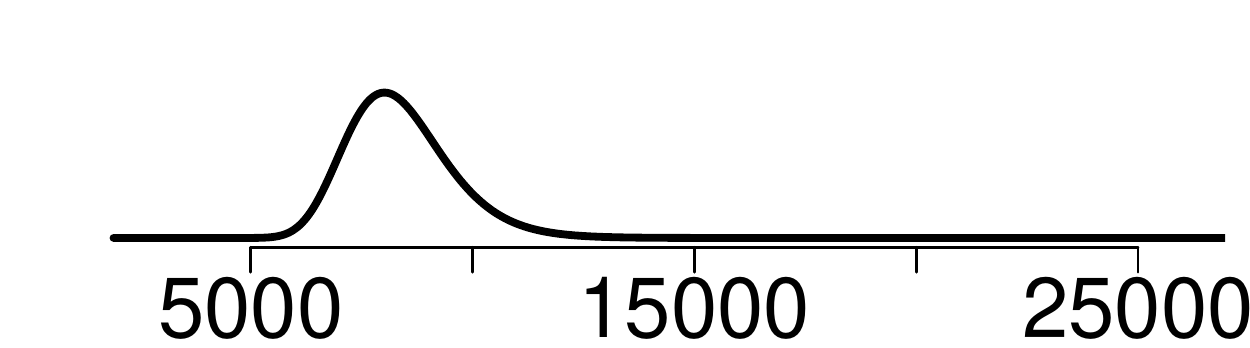}\end{tabular}}& 8380& [5989, 12233]& 4.07\%\\
\hline
\end{tabular}
\end{table}

\subsubsection{Three-Sample Estimates from \cite{MadiganYork97}}

We now use the Bayesian model averaging approach of \cite{MadiganYork97} to estimate $N$.  For each coreference partition $\bZ^{(1)},\dots,\bZ^{(d)}$, we can compute the joint posterior probability of the graphical model $m$ and the population size $N$, $p(m,N\mid \bX,\bZ^{(t)})$, which we can use to derive $p(N\mid \bX,\bZ^{(t)})$.  The gray lines in the first panel of Figure  \ref{f:postN} represent each $p(N\mid \bX,\bZ^{(t)})$ for $t=1,\dots,100$, and the black line represents the linkage-averaged posterior of $N$.  The posteriors of the number of killings derived from the individual draws $\bZ^{(1)},\dots,\bZ^{(100)}$ are somewhat similar to each other, which indicates a small contribution of the linkage uncertainty towards the overall posterior variability of $N$. According to the variance decomposition in \eqref{eq:var_decomp2}, in this case 12\% of the posterior variability is due to uncertainty in duplicate detection and record linkage.  

The second panel in Figure \ref{f:postN} shows the linkage-averaged posterior of $N$ along with $p_{\textsc{la}}(m,N\mid \bX)$, obtained from averaging $p(m,N\mid \bX,\bZ^{(t)})$ over the posterior draws $\bZ^{(1)},\dots,\bZ^{(100)}$, for the three models $m$ that have linkage-averaged posterior probabilities $p_{\textsc{la}}(m\mid \bX)>0.05$.  Denoting 1: ER-TL, 2: CDHES, and 3: UNTC, we find that the posteriors of $N$ under the models [1,3][2], [1,2][2,3], and [1,3][2,3] are concentrated around different values of $N$, which greatly increases the posterior variability of $N$.  In fact, the variance decomposition in \eqref{eq:var_decomp2} tells us that in this case 77\% of the posterior variability of $N$ is due to uncertainty on the graphical model for population size estimation.  This seems to indicate that as long as we have a good estimate of the contingency table of inclusion patterns, ignoring the linkage uncertainty in the population size estimation would not be too harmful, at least for this application.  The linkage-averaging approach leads to a posterior mean of 13,432, and a 99\% credible interval of [5627, 25404].

\begin{figure*}[t]
\centering
		\centerline{\includegraphics[trim = 0cm 0cm 3cm 0cm, width=1\linewidth]{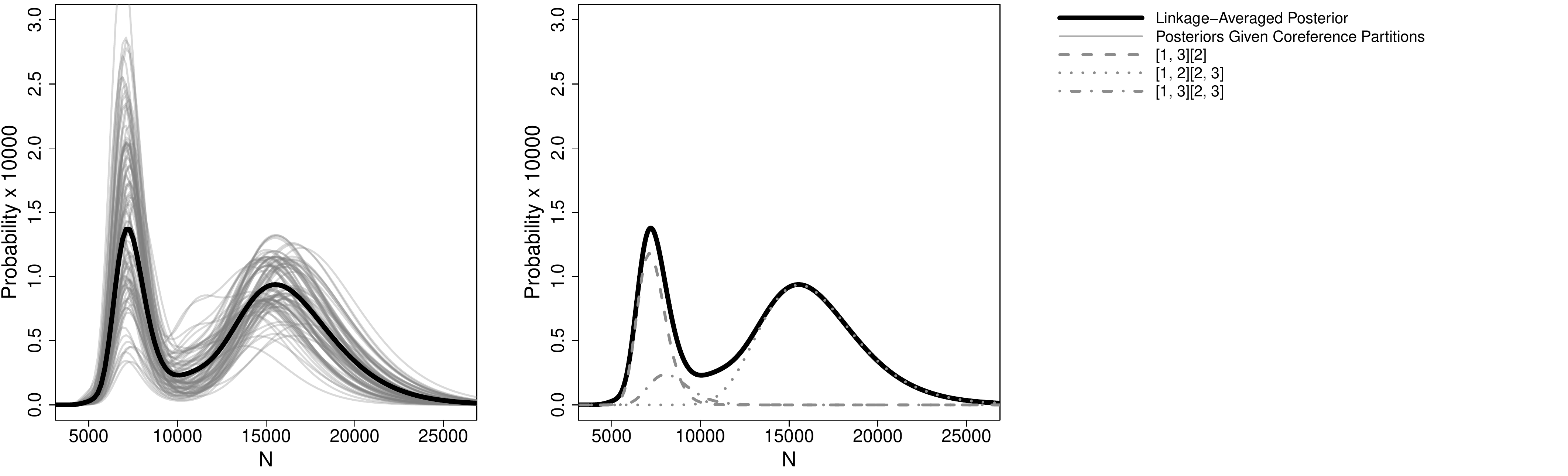}}
  \begin{minipage}[b]{1\textwidth}
  \caption{Posterior of the number of civilian killings averaging over the seven graphical models for population size estimation and over the uncertainty from record linkage and duplicate detection. The data sources are 1: ER-TL, 2: CDHES, 3: UNTC.}
\label{f:postN}\end{minipage} 
\end{figure*}

\subsubsection{Three-Sample Estimates from \cite{Manrique16}}\label{ss:Manrique}

Linkage-averaging for population size estimation can be used with any Bayesian partitioning approach to record linkage and duplicate detection, and any model for population size estimation that depends only on the capture histories' frequencies of the individuals in the lists.  We now use the linkage results described in Section \ref{ss:ElSalvador_linkage} obtained from the approach of \cite{Sadinle14}, along with the population size methodology of \cite{Manrique16}.  

For each of 100 draws $\bZ^{(1)},\dots,\bZ^{(100)}$, we obtained an MCMC sample $N^{(1,t)}, \dots, N^{(20000,t)} \sim p(N\mid \bn(\bZ^{(t)}))$, $t=1,\dots,100$, from the posterior obtained under the model of \cite{Manrique16} using the MCMC implementation of the R package \texttt{LCMCR}.  We then used the approximation \eqref{eq:LA2} of the linkage-averaged posterior of $N$.  Figure \ref{f:postN_MV} presents an approximation of each of $p(N\mid \bn(\bZ^{(t)}))$, $t=1,\dots,100$, and the approximate linkage-averaged posterior of $N$, $p_{\textsc{la}}(N\mid \bX)$.  Under this approach we obtain a posterior 99\% credible interval of [4922, 31429] and a posterior mean of 13,924.  The contribution of the linkage uncertainty to the overall posterior variability is estimated at only 6.3\%.  

\begin{figure*}[t]
\centering
		\centerline{\includegraphics[trim = 0cm 0cm 3cm 0cm, width=.7\linewidth]{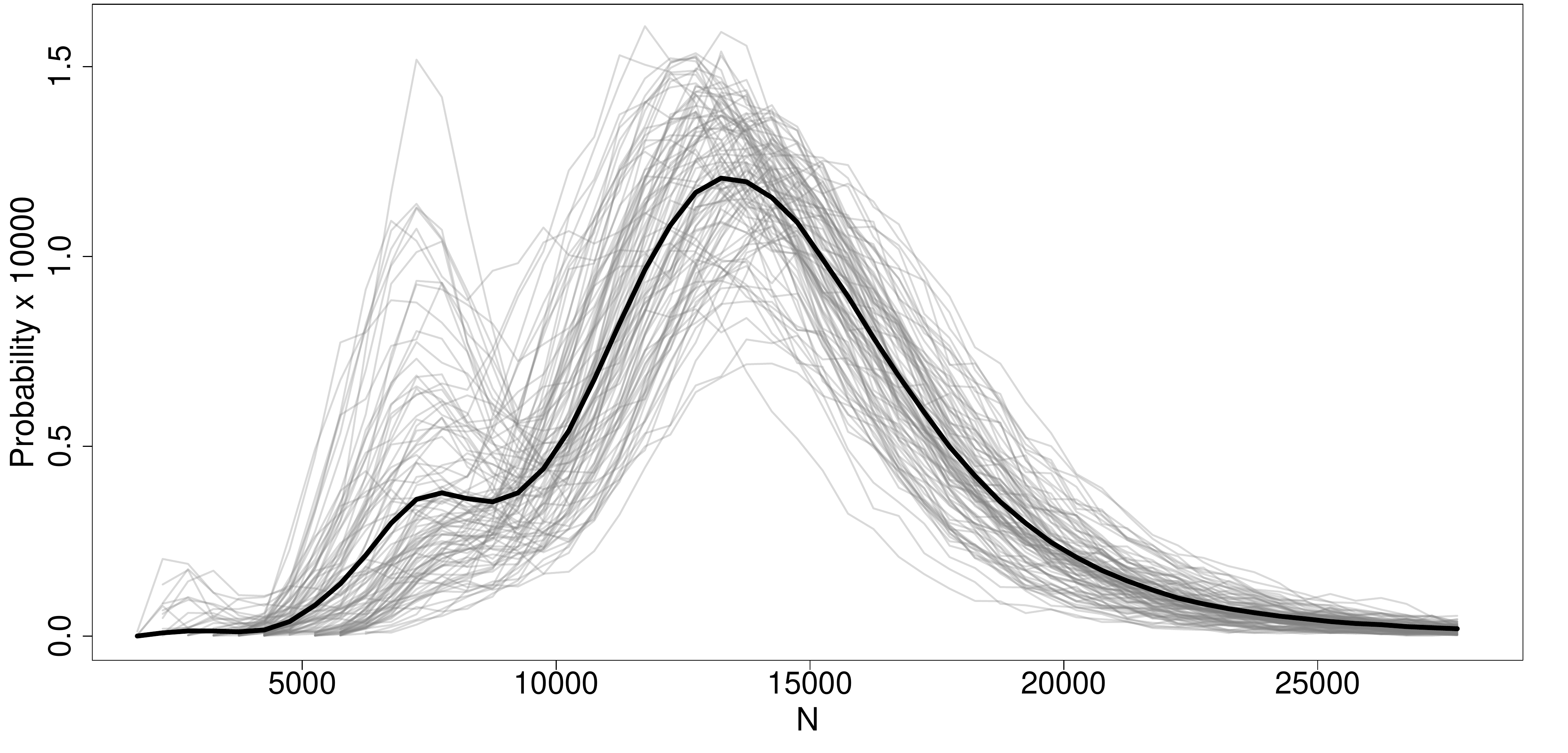}}
  \begin{minipage}[b]{1\textwidth}
  \caption{Black line: linkage-averaged posterior of the number of civilian killings in the region of San Salvador using the population size estimation methodology of \cite{Manrique16}.  Gray lines: individual posteriors of the number of killings given each coreference partition labeling $\bZ^{(t)}$, $t=1,\dots,100$.}
\label{f:postN_MV}\end{minipage} 
\end{figure*}

\subsubsection{Estimates Using Mixture-Model Approach to Record Linkage}

We finally present the results obtained using a more traditional mixture model approach to record linkage  \citep[e.g.][]{FellegiSunter69,Winkler88, Jaro89,LarsenRubin01,Elmagarmidetal07,HerzogScheurenWinkler07}.  Such models output independent  pairwise coreference decisions.  We implemented a mixture model version of the model of \cite{Sadinle14}  as presented in Section \ref{ss:ElSalvador_linkage}.  This approach classifies the record pairs in $\mathcal{C}$ into coreferent and non-coreferent pairs.  The mixture model is obtained by ignoring that the match status of a record pair is given by $M_{ij}=I(Z_i=Z_j)$, and simply taking $M_{ij}\mid p\overset{iid}{\sim} \text{Bernoulli}(p)$, $i< j$.  We used Bayesian estimation of this mixture model employing the same priors for the $m_{fl}$ and $u_{fl}$ parameters as in the  application to the Salvadoran lists, and $p\sim \text{Uniform}(0,1)$.  From running a Gibbs sampler for 10,000 iterations, we obtained a posterior sample of $\{M_{ij}\}_{(i,j)\in \mathcal{C}}$.  

To obtain groups of coreferent records we used transitive closure.  In the mixture model explained above, for each iteration of the Gibbs sampler we obtain a draw of $\{M_{ij}\}_{(i,j)\in \mathcal{C}}$.  For each of these iterations we apply transitive closure by setting $M_{jj'}=1$ if $M_{ij}=M_{ij'}=1$ for any record $i$.  The number of non-transitive triplets $(i,j,j')$, where only two of $M_{ij}$, $M_{ij'}$, $M_{jj'}$ equal to 1, varies between 84 and 156 across the Gibbs iterations, which is not surprising given that this model treats the $M_{ij}$'s as independent.  
Using transitive closure we obtain an ad-hoc constructed distribution of partitions of the records which we can use to implement an ad-hoc version of the linkage-averaged estimate of $N$.  We then proceeded to compute a linkage-averaged posterior using the models of \cite{MadiganYork97} and \cite{Manrique16}, which lead to posterior means of 15,636 and 14,999, and 99\% credible intervals of [6389, 25532] and [5806,  35326], respectively.  In this case the ad-hoc mixture model for record linkage leads to similar results as those obtained using the method of \cite{Sadinle14}.  This can be explained from the fact that both models are essentially the same, with the exception that one samples pairwise matching statuses and the other samples coreference partitions.  Also, the graph induced by the set of candidate coreferent pairs $\mathcal{C}$ is quite sparse and broken into many small connected components, which constrains the clustering effect of transitive closure. Transitive closure can only group records in the same connected component obtained from $\mathcal{C}$.

\subsection{Discussion}

We presented linkage-averaged estimates under individual graphical models, and linkage-averaged two-sample estimates under independence of the list inclusion indicators.  These approaches lead to widely different estimates, but we simply presented them to illustrate the possibilities of linkage-averaging.   The linkage-averaged estimates obtained under the models of \cite{MadiganYork97} and \cite{Manrique16} are more plausible, as they each take into account the uncertainty on the correct model for population size estimation.  

While the same linkage results, in the form of posterior draws of the coreference partition, were used for obtaining all linkage-averaged estimates, the percentage contribution of the linkage uncertainty on the overall uncertainty of $N$ varies with the capture-recapture model.  For some of these approaches the contribution from the linkage is rather small, but we can only measure this after we have computed the linkage-averaged estimates.  

The linkage-averaged posteriors using the models of \cite{MadiganYork97} and \cite{Manrique16} lead to roughly the same point estimates: 13,432 and 13,924 civilian killings, respectively, in the region of San Salvador during the Salvadoran civil war.  The linkage-averaged posteriors themselves, however, disagree in the tails.  The disagreement on the right tail can be explained to some extent when we consider that the prior for $N$ used with the approach of \cite{MadiganYork97} was truncated at 30,000, whereas we did not use this truncation with the approach of \cite{Manrique16} as the implementation of the R package \texttt{LCMCR} does not allow it.  The results using \cite{MadiganYork97} can therefore be seen as somewhat conservative.  

\section{Conclusions}

We presented a linkage-averaging approach to incorporate linkage uncertainty into models for population size estimation.  We used Bayesian partitioning approaches for record linkage which provide posterior distributions on the coreference partition of the records coming from all the data sources.  The models for population size estimation covered by our approach are those whose sufficient statistics are functions of the coreference partition alone.  Under these conditions, linkage-averaging is proper in the sense that it can be derived from a proper Bayesian analysis that combines the record linkage and population size estimation models.  It is important to note, however, that the success of this approach is determined by the success of its components.  For example, if the record linkage model over-links or under-links, then the population size estimates will be lower or higher, respectively, with respect to what we would obtain under the correct linkage.  Similarly, if the model for population size estimation is wrong, our estimates will be deficient regardless of the amount of uncertainty from the linkage stage.  

The class of capture-recapture models considered here is somewhat restrictive given that, for example, they do not allow us to incorporate information on covariates that may influence capture probabilities.  Traditionally, a simple way of dealing with heterogeneous inclusion probabilities in multiple-systems estimation is to stratify by characteristics that influence the inclusion probabilities, such as space and/or time.  To use linkage-averaging to produce population size estimates per stratum (say, year $\times$ region) we would have to assume that the stratifying variables are recorded without error, which might be unreasonable in the context of the datafiles from El Salvador.  For example, suppose two records that disagree in the stratum where they belong are coreferent according to a coreference partition.  Our current methodology does not offer a way of allocating this individual to a unique stratum, nor a way to deal with the uncertainty on where it should be allocated.  However, if the stratifying variables can also be used as blocking variables in the linkage step, then the linkage-averaging approach enjoys Bayesian propriety within each stratum.  In this sense, approaches such as those of \cite{Steortsetal13}, \cite{TancrediLiseo11}, and \cite{LiseoTancredi11} that directly model the information in the datafiles seem promising, given that they explicitly allow us to estimate the latent true values of the individuals in the files.

We also presented an application to the combination of three lists on civilian killings from the civil war of El Salvador.  In this case, the intrinsic variability of Bayesian population size estimation is much larger than the uncertainty coming from the linkage stage, but this might be different in other applications.  Our analyses of the lists from El Salvador indicate that the number of civilian killings during the Salvadoran civil war in the region of San Salvador is most likely to be around 13,000--14,000, but the variability in these estimates is quite large, leading to a posterior 99\% credible interval of [4922, 31429], according to the linkage-averaged estimates obtained using the methodology of \cite{Manrique16}.  Unfortunately, we do not have a way of validating these results, as there does not even exist ground truth for validating the linkage of these datafiles.

\section*{Acknowledgements}
This research is derived from the Ph.D. thesis of the author, supervised by Stephen E. Fienberg.  Steve's many interests included record linkage, population size estimation, and their application to human rights.  The author therefore dedicates this article to the memory of Steve; without his support this research would not have been possible.  

The author also thanks Patrick Ball and Megan Price from the Human Rights Data Analysis Group -- HRDAG for providing access to the data used in this article, and Daniel Manrique-Vallier, Kristian Lum, Robin Mejia, Trivellore Raghunathan, and Thomas Brendan Murphy  for helpful comments that contributed to improving the quality of this article.


\bibliographystyle{imsart-nameyear}
\bibliography{biblio}

\begin{thebibliography}{43}

\bibitem[\protect\citeauthoryear{Anderson and
  Fienberg}{1999}]{AndersonFienbergbook}
\begin{bbook}[author]
\bauthor{\bsnm{Anderson},~\bfnm{Margo~J.}\binits{M.~J.}} \AND
  \bauthor{\bsnm{Fienberg},~\bfnm{Stephen~E.}\binits{S.~E.}}
(\byear{1999}).
\btitle{{Who Counts?: The Politics of Census-Taking in Contemporary America}},
\bedition{Revised paperback (2001)} ed.
\bpublisher{Russell Sage Foundation}, \baddress{New York}.
\end{bbook}
\endbibitem

\bibitem[\protect\citeauthoryear{Ball}{2000}]{Ball00ElSalvador}
\begin{bincollection}[author]
\bauthor{\bsnm{Ball},~\bfnm{Patrick}\binits{P.}}
(\byear{2000}).
\btitle{{The Salvadoran Human Rights Commission: Data Processing, Data
  Representation, and Generating Analytical Reports}}.
In \bbooktitle{{Making the Case: Investigating Large Scale Human Rights
  Violations Using Information Systems and Data Analysis}}
(\beditor{\bfnm{Patrick}\binits{P.}~\bsnm{Ball}},
  \beditor{\bfnm{Herbert~F.}\binits{H.~F.}~\bsnm{Spirer}} \AND
  \beditor{\bfnm{Louise}\binits{L.}~\bsnm{Spirer}}, eds.)
\bpublisher{American Association for the Advancement of Science}.
\end{bincollection}
\endbibitem

\bibitem[\protect\citeauthoryear{Bilenko et~al.}{2003}]{Bilenkoetal03}
\begin{barticle}[author]
\bauthor{\bsnm{Bilenko},~\bfnm{M.}\binits{M.}},
  \bauthor{\bsnm{Mooney},~\bfnm{R.~J.}\binits{R.~J.}},
  \bauthor{\bsnm{Cohen},~\bfnm{W.~W.}\binits{W.~W.}},
  \bauthor{\bsnm{Ravikumar},~\bfnm{P.}\binits{P.}} \AND
  \bauthor{\bsnm{Fienberg},~\bfnm{S.~E.}\binits{S.~E.}}
(\byear{2003}).
\btitle{{Adaptive Name Matching in Information Integration}}.
\bjournal{IEEE Intelligent Systems}
\bvolume{18}
\bpages{16--23}.
\end{barticle}
\endbibitem

\bibitem[\protect\citeauthoryear{Bird and King}{2018}]{BirdKing18}
\begin{barticle}[author]
\bauthor{\bsnm{Bird},~\bfnm{Sheila~M.}\binits{S.~M.}} \AND
  \bauthor{\bsnm{King},~\bfnm{Ruth}\binits{R.}}
(\byear{2018}).
\btitle{{Multiple Systems Estimation (or Capture-ReCapture Estimation) to
  Inform Public Policy}}.
\bjournal{Annual Review of Statistics and Its Application}
\bvolume{5}.
\end{barticle}
\endbibitem

\bibitem[\protect\citeauthoryear{Bishop, Fienberg and
  Holland}{1975}]{Bishopetal75}
\begin{bbook}[author]
\bauthor{\bsnm{Bishop},~\bfnm{Yvonne~M.}\binits{Y.~M.}},
  \bauthor{\bsnm{Fienberg},~\bfnm{Stephen~E.}\binits{S.~E.}} \AND
  \bauthor{\bsnm{Holland},~\bfnm{Paul~W.}\binits{P.~W.}}
(\byear{1975}).
\btitle{{Discrete Multivariate Analysis: Theory and Practice}}.
\bpublisher{The MIT Press. Reprinted in 2007 by Springer, New York}.
\end{bbook}
\endbibitem

\bibitem[\protect\citeauthoryear{Castledine}{1981}]{Castledine81}
\begin{barticle}[author]
\bauthor{\bsnm{Castledine},~\bfnm{B.~J.}\binits{B.~J.}}
(\byear{1981}).
\btitle{{A Bayesian Analysis of Multiple-Recapture Sampling for a Closed
  Population}}.
\bjournal{Biometrika}
\bvolume{68}
\bpages{197--210}.
\end{barticle}
\endbibitem

\bibitem[\protect\citeauthoryear{Christen}{2012}]{Christen12}
\begin{barticle}[author]
\bauthor{\bsnm{Christen},~\bfnm{Peter}\binits{P.}}
(\byear{2012}).
\btitle{{A Survey of Indexing Techniques for Scalable Record Linkage and
  Deduplication}}.
\bjournal{IEEE Transactions on Knowledge and Data Engineering}
\bvolume{24}
\bpages{1537--1555}.
\end{barticle}
\endbibitem

\bibitem[\protect\citeauthoryear{Dawid and Lauritzen}{1993}]{DawidLauritzen93}
\begin{barticle}[author]
\bauthor{\bsnm{Dawid},~\bfnm{A.~P.}\binits{A.~P.}} \AND
  \bauthor{\bsnm{Lauritzen},~\bfnm{S.~L.}\binits{S.~L.}}
(\byear{1993}).
\btitle{{Hyper Markov Laws in the Statistical Analysis of Decomposable
  Graphical Models}}.
\bjournal{Annals of Statistics}
\bvolume{21}
\bpages{1272--1317}.
\end{barticle}
\endbibitem

\bibitem[\protect\citeauthoryear{Edwards}{2000}]{Edwards00}
\begin{bbook}[author]
\bauthor{\bsnm{Edwards},~\bfnm{David}\binits{D.}}
(\byear{2000}).
\btitle{{Introduction to Graphical Modelling}},
\bedition{2nd} ed.
\bpublisher{Springer-Verlag}.
\end{bbook}
\endbibitem

\bibitem[\protect\citeauthoryear{Elmagarmid, Ipeirotis and
  Verykios}{2007}]{Elmagarmidetal07}
\begin{barticle}[author]
\bauthor{\bsnm{Elmagarmid},~\bfnm{Ahmed~K.}\binits{A.~K.}},
  \bauthor{\bsnm{Ipeirotis},~\bfnm{Panagiotis~G.}\binits{P.~G.}} \AND
  \bauthor{\bsnm{Verykios},~\bfnm{Vassilios~S.}\binits{V.~S.}}
(\byear{2007}).
\btitle{{Duplicate Record Detection: A Survey}}.
\bjournal{IEEE Transactions on Knowledge and Data Engineering}
\bvolume{19}
\bpages{1--16}.
\end{barticle}
\endbibitem

\bibitem[\protect\citeauthoryear{Ericksen, Kadane and
  Tukey}{1989}]{EricksenKadaneTukey89}
\begin{barticle}[author]
\bauthor{\bsnm{Ericksen},~\bfnm{Eugene~P.}\binits{E.~P.}},
  \bauthor{\bsnm{Kadane},~\bfnm{Joseph~B.}\binits{J.~B.}} \AND
  \bauthor{\bsnm{Tukey},~\bfnm{John~W.}\binits{J.~W.}}
(\byear{1989}).
\btitle{{Adjusting the 1980 Census of Population and Housing}}.
\bjournal{Journal of the American Statistical Association}
\bvolume{84}
\bpages{927--944}.
\end{barticle}
\endbibitem

\bibitem[\protect\citeauthoryear{Fellegi and Sunter}{1969}]{FellegiSunter69}
\begin{barticle}[author]
\bauthor{\bsnm{Fellegi},~\bfnm{Ivan~P.}\binits{I.~P.}} \AND
  \bauthor{\bsnm{Sunter},~\bfnm{Alan~B.}\binits{A.~B.}}
(\byear{1969}).
\btitle{{A Theory for Record Linkage}}.
\bjournal{Journal of the American Statistical Association}
\bvolume{64}
\bpages{1183--1210}.
\end{barticle}
\endbibitem

\bibitem[\protect\citeauthoryear{Fienberg}{1972}]{Fienberg72}
\begin{barticle}[author]
\bauthor{\bsnm{Fienberg},~\bfnm{Stephen~E.}\binits{S.~E.}}
(\byear{1972}).
\btitle{{The Multiple Recapture Census for Closed Populations and Incomplete
  {2$^{k}$} Contingency Tables}}.
\bjournal{Biometrika}
\bvolume{59}
\bpages{591-603}.
\end{barticle}
\endbibitem

\bibitem[\protect\citeauthoryear{Fienberg, Johnson and
  Junker}{1999}]{FienbergJohnsonJunker99}
\begin{barticle}[author]
\bauthor{\bsnm{Fienberg},~\bfnm{Stephen~E.}\binits{S.~E.}},
  \bauthor{\bsnm{Johnson},~\bfnm{Matthew~S.}\binits{M.~S.}} \AND
  \bauthor{\bsnm{Junker},~\bfnm{Brian~W.}\binits{B.~W.}}
(\byear{1999}).
\btitle{{Classical Multilevel and Bayesian Approaches to Population Size
  Estimation Using Multiple Lists}}.
\bjournal{Journal of the Royal Statistical Society. Series A (Statistics in
  Society)}
\bvolume{162}
\bpages{383--405}.
\end{barticle}
\endbibitem

\bibitem[\protect\citeauthoryear{Fortini et~al.}{2002}]{Fortinietal02}
\begin{binproceedings}[author]
\bauthor{\bsnm{Fortini},~\bfnm{M.}\binits{M.}},
  \bauthor{\bsnm{Nuccitelli},~\bfnm{A.}\binits{A.}},
  \bauthor{\bsnm{Liseo},~\bfnm{B.}\binits{B.}} \AND
  \bauthor{\bsnm{Scanu},~\bfnm{M}\binits{M.}}
(\byear{2002}).
\btitle{{Modeling Issues in Record Linkage: A Bayesian Perspective}}.
In \bbooktitle{Proceedings of the Section on Survey Research Methods}
\bpages{1008--1013}.
\bpublisher{American Statistical Association}.
\end{binproceedings}
\endbibitem

\bibitem[\protect\citeauthoryear{George and Robert}{1992}]{GeorgeRobert92}
\begin{barticle}[author]
\bauthor{\bsnm{George},~\bfnm{Edward~I.}\binits{E.~I.}} \AND
  \bauthor{\bsnm{Robert},~\bfnm{Christian~P.}\binits{C.~P.}}
(\byear{1992}).
\btitle{{Capture-Recapture Estimation Via Gibbs Sampling}}.
\bjournal{Biometrika}
\bvolume{79}
\bpages{677--683}.
\end{barticle}
\endbibitem

\bibitem[\protect\citeauthoryear{Gutman, Afendulis and
  Zaslavsky}{2013}]{Gutmanetal13}
\begin{barticle}[author]
\bauthor{\bsnm{Gutman},~\bfnm{Roee}\binits{R.}},
  \bauthor{\bsnm{Afendulis},~\bfnm{Christopher~C.}\binits{C.~C.}} \AND
  \bauthor{\bsnm{Zaslavsky},~\bfnm{Alan~M.}\binits{A.~M.}}
(\byear{2013}).
\btitle{{A Bayesian Procedure for File Linking to Analyze End-of-Life Medical
  Costs}}.
\bjournal{Journal of the American Statistical Association}
\bvolume{108}
\bpages{34--47}.
\end{barticle}
\endbibitem

\bibitem[\protect\citeauthoryear{Herzog, Scheuren and
  Winkler}{2007}]{HerzogScheurenWinkler07}
\begin{bbook}[author]
\bauthor{\bsnm{Herzog},~\bfnm{Thomas~N.}\binits{T.~N.}},
  \bauthor{\bsnm{Scheuren},~\bfnm{Fritz~J.}\binits{F.~J.}} \AND
  \bauthor{\bsnm{Winkler},~\bfnm{William~E.}\binits{W.~E.}}
(\byear{2007}).
\btitle{{Data Quality and Record Linkage Techniques}}.
\bpublisher{Springer}, \baddress{New York}.
\end{bbook}
\endbibitem

\bibitem[\protect\citeauthoryear{Hogan}{1992}]{Hogan92}
\begin{barticle}[author]
\bauthor{\bsnm{Hogan},~\bfnm{Howard}\binits{H.}}
(\byear{1992}).
\btitle{{The 1990 Post-Enumeration Survey: An Overview}}.
\bjournal{The American Statistician}
\bvolume{46}
\bpages{261--269}.
\end{barticle}
\endbibitem

\bibitem[\protect\citeauthoryear{Hogan}{1993}]{Hogan93}
\begin{barticle}[author]
\bauthor{\bsnm{Hogan},~\bfnm{Howard}\binits{H.}}
(\byear{1993}).
\btitle{{The 1990 Post-Enumeration Survey: Operations and Results}}.
\bjournal{Journal of the American Statistical Association}
\bvolume{88}
\bpages{1047--1060}.
\end{barticle}
\endbibitem

\bibitem[\protect\citeauthoryear{Howland}{2008}]{Howland08}
\begin{barticle}[author]
\bauthor{\bsnm{Howland},~\bfnm{Todd}\binits{T.}}
(\byear{2008}).
\btitle{{How El Rescate, a Small Nongovernmental Organization, Contributed to
  the Transformation of the Human Rights Situation in El Salvador}}.
\bjournal{Human Rights Quarterly}
\bvolume{30}
\bpages{703--757}.
\end{barticle}
\endbibitem

\bibitem[\protect\citeauthoryear{Ishwaran and James}{2001}]{IshwaranJames01}
\begin{barticle}[author]
\bauthor{\bsnm{Ishwaran},~\bfnm{H.}\binits{H.}} \AND
  \bauthor{\bsnm{James},~\bfnm{L.~F.}\binits{L.~F.}}
(\byear{2001}).
\btitle{{Gibbs sampling for stick-breaking priors}}.
\bjournal{Journal of the American Statistical Association}
\bvolume{96}
\bpages{161--173}.
\end{barticle}
\endbibitem

\bibitem[\protect\citeauthoryear{Jaro}{1989}]{Jaro89}
\begin{barticle}[author]
\bauthor{\bsnm{Jaro},~\bfnm{Matthew~A.}\binits{M.~A.}}
(\byear{1989}).
\btitle{{Advances in Record-Linkage Methodology as Applied to Matching the 1985
  Census of Tampa, Florida}}.
\bjournal{Journal of the American Statistical Association}
\bvolume{84}
\bpages{414--420}.
\end{barticle}
\endbibitem

\bibitem[\protect\citeauthoryear{LaPorte et~al.}{1993}]{LaPorteetal93}
\begin{barticle}[author]
\bauthor{\bsnm{LaPorte},~\bfnm{Ronald~E}\binits{R.~E.}},
  \bauthor{\bsnm{McCarty},~\bfnm{Daniel}\binits{D.}},
  \bauthor{\bsnm{Bruno},~\bfnm{Graziella}\binits{G.}},
  \bauthor{\bsnm{Tajima},~\bfnm{Naoko}\binits{N.}} \AND
  \bauthor{\bsnm{Baba},~\bfnm{Shigeaki}\binits{S.}}
(\byear{1993}).
\btitle{Counting Diabetes in the Next Millennium: Application of
  capture-recapture technology}.
\bjournal{Diabetes Care}
\bvolume{16}
\bpages{528--534}.
\end{barticle}
\endbibitem

\bibitem[\protect\citeauthoryear{Larsen and Rubin}{2001}]{LarsenRubin01}
\begin{barticle}[author]
\bauthor{\bsnm{Larsen},~\bfnm{Michael~D.}\binits{M.~D.}} \AND
  \bauthor{\bsnm{Rubin},~\bfnm{Donald~B.}\binits{D.~B.}}
(\byear{2001}).
\btitle{{Iterative Automated Record Linkage Using Mixture Models}}.
\bjournal{Journal of the American Statistical Association}
\bvolume{96}
\bpages{32--41}.
\end{barticle}
\endbibitem

\bibitem[\protect\citeauthoryear{Lauritzen}{1996}]{Lauritzen96}
\begin{bbook}[author]
\bauthor{\bsnm{Lauritzen},~\bfnm{Steffen}\binits{S.}}
(\byear{1996}).
\btitle{{Graphical Models}}.
\bpublisher{Oxford University Press}.
\end{bbook}
\endbibitem

\bibitem[\protect\citeauthoryear{Liseo and Tancredi}{2011}]{LiseoTancredi11}
\begin{barticle}[author]
\bauthor{\bsnm{Liseo},~\bfnm{Brunero}\binits{B.}} \AND
  \bauthor{\bsnm{Tancredi},~\bfnm{Andrea}\binits{A.}}
(\byear{2011}).
\btitle{{Bayesian Estimation of Population Size via Linkage of Multivariate
  Normal Data Sets}}.
\bjournal{Journal of Official Statistics}
\bvolume{27}
\bpages{491--505}.
\end{barticle}
\endbibitem

\bibitem[\protect\citeauthoryear{Lum, Price and Banks}{2013}]{LumPriceBanks13}
\begin{barticle}[author]
\bauthor{\bsnm{Lum},~\bfnm{Kristian}\binits{K.}},
  \bauthor{\bsnm{Price},~\bfnm{Megan~Emily}\binits{M.~E.}} \AND
  \bauthor{\bsnm{Banks},~\bfnm{David}\binits{D.}}
(\byear{2013}).
\btitle{{Applications of Multiple Systems Estimation in Human Rights
  Research}}.
\bjournal{The American Statistician}
\bvolume{67}
\bpages{191--200}.
\end{barticle}
\endbibitem

\bibitem[\protect\citeauthoryear{Madigan and York}{1997}]{MadiganYork97}
\begin{barticle}[author]
\bauthor{\bsnm{Madigan},~\bfnm{David}\binits{D.}} \AND
  \bauthor{\bsnm{York},~\bfnm{Jeremy~C.}\binits{J.~C.}}
(\byear{1997}).
\btitle{{Bayesian Methods for Estimation of the Size of a Closed Population}}.
\bjournal{Biometrika}
\bvolume{1}
\bpages{19--31}.
\end{barticle}
\endbibitem

\bibitem[\protect\citeauthoryear{Manrique-Vallier}{2016}]{Manrique16}
\begin{barticle}[author]
\bauthor{\bsnm{Manrique-Vallier},~\bfnm{Daniel}\binits{D.}}
(\byear{2016}).
\btitle{{Bayesian Population Size Estimation Using Dirichlet Process
  Mixtures}}.
\bjournal{Biometrics}
\bvolume{72}
\bpages{1246--1254}.
\end{barticle}
\endbibitem

\bibitem[\protect\citeauthoryear{Matsakis}{2010}]{Matsakis10}
\begin{bphdthesis}[author]
\bauthor{\bsnm{Matsakis},~\bfnm{Nicholas~Elias}\binits{N.~E.}}
(\byear{2010}).
\btitle{{Active Duplicate Detection with Bayesian Nonparametric Models}}
\btype{PhD thesis},
\bpublisher{Massachusetts Institute of Technology}.
\end{bphdthesis}
\endbibitem

\bibitem[\protect\citeauthoryear{{Commission on the Truth for El
  Salvador}}{1993}]{CTElSalvador93}
\begin{bmisc}[author]
\bauthor{\bsnm{{Commission on the Truth for El Salvador}}}
(\byear{1993}).
\btitle{{From Madness to Hope: the 12-Year War in El Salvador: Report of the
  Commission on the Truth for El Salvador}}.
\bhowpublished{http://www.usip.org/files/file/ElSalvador-Report.pdf [Accessed
  May 31, 2013]}.
\bnote{UN Security Council}.
\end{bmisc}
\endbibitem

\bibitem[\protect\citeauthoryear{Plummer et~al.}{2006}]{CODA}
\begin{barticle}[author]
\bauthor{\bsnm{Plummer},~\bfnm{Martyn}\binits{M.}},
  \bauthor{\bsnm{Best},~\bfnm{Nicky}\binits{N.}},
  \bauthor{\bsnm{Cowles},~\bfnm{Kate}\binits{K.}} \AND
  \bauthor{\bsnm{Vines},~\bfnm{Karen}\binits{K.}}
(\byear{2006}).
\btitle{{CODA: Convergence Diagnosis and Output Analysis for MCMC}}.
\bjournal{R News}
\bvolume{6}
\bpages{7--11}.
\end{barticle}
\endbibitem

\bibitem[\protect\citeauthoryear{Pollock}{2000}]{Pollock00}
\begin{barticle}[author]
\bauthor{\bsnm{Pollock},~\bfnm{K.~H.}\binits{K.~H.}}
(\byear{2000}).
\btitle{{Capture-Recapture Models}}.
\bjournal{Journal of the American Statistical Association}
\bvolume{95}
\bpages{293--296}.
\end{barticle}
\endbibitem

\bibitem[\protect\citeauthoryear{Price and Ball}{2015}]{PriceBall15}
\begin{barticle}[author]
\bauthor{\bsnm{Price},~\bfnm{Megan}\binits{M.}} \AND
  \bauthor{\bsnm{Ball},~\bfnm{Patrick}\binits{P.}}
(\byear{2015}).
\btitle{{Selection Bias and the Statistical Patterns of Mortality in
  Conflict}}.
\bjournal{Statistical Journal of the IAOS}
\bvolume{31}
\bpages{263--272}.
\end{barticle}
\endbibitem

\bibitem[\protect\citeauthoryear{Price, Gohdes and Ball}{2015}]{Priceetal15}
\begin{barticle}[author]
\bauthor{\bsnm{Price},~\bfnm{Megan}\binits{M.}},
  \bauthor{\bsnm{Gohdes},~\bfnm{Anita}\binits{A.}} \AND
  \bauthor{\bsnm{Ball},~\bfnm{Patrick}\binits{P.}}
(\byear{2015}).
\btitle{{Documents of War: Understanding the Syrian Conflict}}.
\bjournal{Significance}
\bvolume{12}
\bpages{14--19}.
\end{barticle}
\endbibitem

\bibitem[\protect\citeauthoryear{Sadinle}{2014}]{Sadinle14}
\begin{barticle}[author]
\bauthor{\bsnm{Sadinle},~\bfnm{Mauricio}\binits{M.}}
(\byear{2014}).
\btitle{{Detecting Duplicates in a Homicide Registry Using a Bayesian
  Partitioning Approach}}.
\bjournal{Annals of Applied Statistics}
\bvolume{8}
\bpages{2404--2434}.
\end{barticle}
\endbibitem

\bibitem[\protect\citeauthoryear{Sadinle}{2017}]{Sadinle17}
\begin{barticle}[author]
\bauthor{\bsnm{Sadinle},~\bfnm{Mauricio}\binits{M.}}
(\byear{2017}).
\btitle{{Bayesian Estimation of Bipartite Matchings for Record Linkage}}.
\bjournal{Journal of the American Statistical Association}
\bvolume{112}
\bpages{600-612}.
\end{barticle}
\endbibitem

\bibitem[\protect\citeauthoryear{Steorts}{2015}]{Steorts15}
\begin{barticle}[author]
\bauthor{\bsnm{Steorts},~\bfnm{Rebecca~C.}\binits{R.~C.}}
(\byear{2015}).
\btitle{{Entity Resolution with Empirically Motivated Priors}}.
\bjournal{Bayesian Analysis}
\bvolume{10}
\bpages{849--875}.
\end{barticle}
\endbibitem

\bibitem[\protect\citeauthoryear{Steorts, Hall and
  Fienberg}{2016}]{Steortsetal13}
\begin{barticle}[author]
\bauthor{\bsnm{Steorts},~\bfnm{Rebecca~C.}\binits{R.~C.}},
  \bauthor{\bsnm{Hall},~\bfnm{Rob}\binits{R.}} \AND
  \bauthor{\bsnm{Fienberg},~\bfnm{Stephen~E.}\binits{S.~E.}}
(\byear{2016}).
\btitle{{A Bayesian Approach to Graphical Record Linkage and Deduplication}}.
\bjournal{Journal of the American Statistical Association}
\bvolume{111}
\bpages{1660--1672}.
\end{barticle}
\endbibitem

\bibitem[\protect\citeauthoryear{Tancredi and Liseo}{2011}]{TancrediLiseo11}
\begin{barticle}[author]
\bauthor{\bsnm{Tancredi},~\bfnm{Andrea}\binits{A.}} \AND
  \bauthor{\bsnm{Liseo},~\bfnm{Brunero}\binits{B.}}
(\byear{2011}).
\btitle{{A Hierarchical Bayesian Approach to Record Linkage and Size Population
  Problems}}.
\bjournal{Annals of Applied Statistics}
\bvolume{5}
\bpages{1553--1585}.
\end{barticle}
\endbibitem

\bibitem[\protect\citeauthoryear{Winkler}{1988}]{Winkler88}
\begin{binproceedings}[author]
\bauthor{\bsnm{Winkler},~\bfnm{W.~E.}\binits{W.~E.}}
(\byear{1988}).
\btitle{{Using the EM Algorithm for Weight Computation in the Fellegi-Sunter
  Model of Record Linkage}}.
In \bbooktitle{Proceedings of the Section on Survey Research Methods}
\bpages{667--671}.
\bpublisher{American Statistical Association}.
\end{binproceedings}
\endbibitem

\bibitem[\protect\citeauthoryear{Winkler}{1990}]{Winkler90Strings}
\begin{binproceedings}[author]
\bauthor{\bsnm{Winkler},~\bfnm{W.~E.}\binits{W.~E.}}
(\byear{1990}).
\btitle{{String Comparator Metrics and Enhanced Decision Rules in the
  Fellegi-Sunter Model of Record Linkage}}.
In \bbooktitle{Proceedings of the Section on Survey Research Methods}
\bpages{354--359}.
\bpublisher{American Statistical Association}.
\end{binproceedings}
\endbibitem

\end{thebibliography}

\end{document}